\documentclass[journal]{ieeetran}
\IEEEoverridecommandlockouts

\usepackage{cite}
\usepackage{amsmath,amssymb,amsfonts}
\usepackage{algorithmic}
\usepackage{graphicx}
\usepackage{textcomp}
\usepackage{dcolumn}% Align table columns on decimal point
\usepackage{bm}% bold math
\usepackage{braket}
\usepackage{overpic}
\usepackage{subcaption}
\usepackage{mathtools}
\usepackage{hyperref}

\newtheorem{thm}{Theorem}
\newtheorem{rem}{Remark}

\def\BibTeX{{\rm B\kern-.05em{\sc i\kern-.025em b}\kern-.08em
    T\kern-.1667em\lower.7ex\hbox{E}\kern-.125emX}}
    
\begin{document}

\title{A Model Predictive Control-Inspired Quantum Algorithm}

\author{Dominic Messina, Alicia B. Magann, Mohan Sarovar, and Helen Durand \thanks{Dominic Messina and Helen Durand are with Wayne State University, Detroit, MI 48202 USA, and the Quantum Algorithms and Applications Collaboratory, Sandia National Laboratories, Albuquerque, NM 87123 USA.  Alicia Magann is with the Quantum Algorithms and Applications Collaboratory, Sandia National Laboratories, Albuquerque, NM 87123 USA.  Mohan Sarovar is with the Quantum Algorithms and Applications Collaboratory, Sandia National Laboratories, Livermore, CA 94550 USA.  Corresponding authors: Alicia Magann (email: abmagan@sandia.gov); Helen Durand (email: helen.durand@wayne.edu).}}

\maketitle

\begin{abstract}
We introduce a new hybrid quantum-classical algorithm inspired by an advanced control strategy known as model predictive control (MPC). This algorithm unifies the optimization-based design of variational quantum algorithms (VQAs) with the feedback-based design of feedback-based quantum algorithms (FQAs). Variational circuit parameters are optimized using a layer-wise receding horizon strategy, where observable measurements after every layer initialize a classically simulated dynamic model used to predict quantum state evolution and optimize over future parameterized gates. This hybrid algorithm can be used for applications such as ground state preparation and approximate combinatorial optimization, and presents an ideal use case for the integration of quantum computers with high-performance computing, where the latter resource can be used to increase the scale and efficiency of the predictions critical to MPC. We show through mathematical proof and numerical evidence that the MPC-based algorithm can be guaranteed to at least match the performance of FQAs. Through simulations on Max-Cut problems and a two-dimensional transverse-field Ising model, we demonstrate that relaxed implementations of the MPC-based algorithm can also provide improved performance in practice compared to an FQA. 
\end{abstract}

\begin{IEEEkeywords}
model predictive control, quantum computing, quantum algorithm, control theory
\end{IEEEkeywords}

\section{Introduction}\label{sec:Introduction}

Variational quantum algorithms (VQAs)~\cite{cerezo2021variational} such as the Quantum Approximate Optimization Algorithm (QAOA)~\cite{farhi2014quantumapproximateoptimizationalgorithm, hadfield2019quantum} and the Variational Quantum Eigensolver (VQE)~\cite{peruzzo2014variational} have been explored in the combinatorial optimization and computational chemistry domains, especially for noisy intermediate-scale quantum (NISQ)~\cite{preskill2018quantum} era quantum computers.  These algorithms utilize classical optimization to iteratively adjust the parameters of a quantum circuit to minimize a cost function. Feedback-based quantum algorithms (FQAs)~\cite{magann2022feedback, magann2022lyapunov} have been presented as an alternative strategy that employs measurement and feedback to explicitly determine circuit parameters. Both VQAs and FQAs, however, face limitations to their applicability. In VQAs, it is common for the optimization landscape to have flat and rugged regions that prevent the classical optimizer from finding optimal parameters in a reasonable timeframe. FQAs avoid the need for classical optimization, however they typically require longer circuits compared to VQAs. 

Managing optimality and classical optimization computation time is not unique to quantum computing.  In the field of chemical process control, for example, finite-horizon optimization-based control laws have been in use for decades to replace explicit yet potentially sub-optimal control laws, e.g., proportional-integral-derivative (PID) control laws~\cite{Ogunnaike}, with control actions that ideally approximate an infinite-horizon optimal policy~\cite{mayne2000constrained,grune2008infinite}. Though explicit optimal control laws, e.g., the linear quadratic regulator~\cite{kalman1960contributions}, were derived for special circumstances such as linear process models and a quadratic objective function, many optimal control problems have eluded explicit solutions~\cite{dontchev2020approximating} or been computationally challenging to solve for both high-dimensional continuous-time and discrete-time systems using the Hamilton-Jacobi-Bellman equations or dynamic programming. This motivated research evaluating how and when optimal control problems for higher-dimensional systems could be solved via approaches such as algorithmic and mathematical reformulations or approximations~\cite{chilan2020optimal,mceneaney2006curse,chow2018algorithm,horowitz2014linear,mceneaney2005curse,li2008mitigation,luus1990application,lawton1998numerically,powell2007approximate}.  A major approach to making online control tractable has been to move to receding-horizon control policies, which repeatedly solve for a sequence of control actions that are optimal over a short time period, re-solving over the same time period and adjusting future actions as the system evolves. Model predictive control (MPC)~\cite{ garcia1989model, mayne2000constrained, michalska2002robust, amrit2011economic, heidarinejad2012economic} has been an impactful receding horizon control strategy that seeks to obtain the benefits of online optimization and feedback control while accepting potential losses in optimality that may result from using a finite, receding horizon, compared to solving an infinite-horizon optimization problem.  The tradeoff between optimality and computational tractability seen in the comparison of VQAs and FQAs above demonstrates that near-term quantum algorithms are facing issues similar to those that led to widespread interest in MPC, suggesting that MPC may form a useful approach to overcoming limitations of VQAs and FQAs.

Various previous works have explored the applications of MPC in quantum control~\cite{hashimoto2016stability, goldschmidt2022model, clouatre2022model, lee2024robust, lee2024model, lee2025model, lee2025tractableinfinitehorizonstochasticmodel, guizani2025modelpredictivequantumcontrol}, but its potential applications in quantum algorithms have remained unexplored. Parameterized quantum circuits have strong parallels to control-theoretic concepts \cite{PRXQuantum.2.010101}, where parameterized gates can be interpreted as control actions that drive a quantum state to a target. In particular, the closed-form parameter-setting strategies used by FQAs are analogous to explicit feedback-based control, where control actions are conditioned on measurements of a system. One such example is the Feedback-based ALgorithm for Quantum OptimizatioN (FALQON)~\cite{magann2022feedback, magann2022lyapunov}, where parameterized quantum circuits are developed via a discrete-time analogy to quantum Lyapunov control. While feedback-based strategies bypass the classical optimization used in VQAs and can result in monotonic convergence to a (possibly local) optimum in the cost function, the gate parameterizations that they prescribe may not be globally optimal. MPC incorporates both feedback and optimization, suggesting that it could form a strategy that provides more optimal gate parameterizations than FQAs, which could in turn decrease the high circuit depth of the solutions output by FQAs.

In this work, we present a hybrid quantum-classical strategy for designing parameterized quantum circuits inspired by the layer-wise, feedback-based design of FALQON and the receding horizon, model-based optimization strategy of MPC. Specifically, we develop an optimization algorithm that uses classical simulation of quantum circuits to guide the selection of circuit parameters. The key innovation of this work is the demonstration of the use of advanced control methods in generating a hybrid quantum-classical algorithm that can reduce the gate depth needed to achieve target cost function values relative to FALQON. Our hybrid algorithm has the key feature of a tunable classical computing overhead since as we discuss below, longer prediction horizons can correspond to improved performance of an MPC algorithm. However, the classical simulation of a quantum circuit for prediction, even for a short time horizon, can be computationally challenging (even with the integration of approximate methods that we discuss below), and thus there is a natural trade-off between prediction and MPC algorithm quality and computational burden in our approach. A natural avenue for increasing the quality of prediction is the integration of high performance computing (HPC) with the quantum computer in order to perform the classical circuit simulation.

We first provide background on FALQON and MPC in Sec. \ref{sec:background}. We then outline the construction of a quantum algorithm for variational state preparation and energy estimation using MPC concepts in Sec. \ref{sec:MPCAlgorithm}. This section includes a theoretical proof that the performance of MPC-based quantum algorithms can be guaranteed to match or be superior to FALQON. In Sec. \ref{sec:PauliPropMPC} we discuss the formulation of reduced-order models to ease the computational burden of MPC and construct a concerete reduced-order model based on Pauli propagation and classical shadows. Then, Sec. \ref{sec:numerics} presents numerical simulations that demonstrate various aspects of the MPC-based quantum algorithm, including performance as a function of hyperparameters in the algorithm. These numerical simulations elucidate interactions between the various design choices that can be made in the pursuit of a tractable yet useful algorithm design.

\section{Background}
\label{sec:background}
The MPC-based algorithm takes inspiration from FALQON, where gate parameters are conditioned on feedback received prior to adding a layer. A major performance goal of the MPC-based algorithm is to reduce gate depth and improve convergence to the target solution relative to FALQON. As a result, in many of the simulations presented, the MPC-based algorithm is benchmarked against FALQON. In this section, we recall background on FALQON and MPC relevant to the construction of the MPC-based algorithm.

\subsection{Feedback-Based Quantum Algorithm}\label{sec:FALQON}

FALQON~\cite{magann2022feedback,magann2022lyapunov} is a feedback-based quantum algorithm for approximate ground state preparation. FALQON uses concepts from quantum Lyapunov control~\cite{cong2013survey} to determine a sequence of gate parameters, deriving a circuit that produces a state that (ideally) minimizes the expectation value of a problem Hamiltonian $H_p$, where the minimum value is the ground state energy. The gate parameter update strategy is derived from quantum Lyapunov control arguments applied to a quantum system with dynamics described by the Schr\"{o}dinger equation as follows:
\begin{equation}
\begin{aligned}
    i\hbar\frac{d}{dt}\ket{\psi(t)} & = H(t)\ket{\psi(t)} \\
                                & = [H_p + H_d\beta(t)]\ket{\psi(t)}, \label{eq:Schrodinger}
\end{aligned}
\end{equation}
where $H_p$ represents the drift Hamiltonian, $H_d$ is the driver Hamiltonian that is multiplied by the time-dependent control function $\beta(t)$, $\ket{\psi(t)}$ represents the quantum state at time $t$, and $\hbar = 1$ is the reduced Planck constant. The control function $\beta(t)$ can be chosen according to quantum Lyapunov control principles \cite{grivopoulos2003lyapunov} to monotonically decrease the expectation value of $H_p$
by selecting $\beta(t) = -A(t)$, where $A(t) \equiv \braket{\psi(t)|i[H_d,H_p]|\psi(t)}$, such that:
\begin{equation}
    \frac{d}{dt}\langle\psi(t)|H_p|\psi(t)\rangle = -A(t)^2 \leq 0.
\end{equation}
The strategy by which FALQON develops a parameterized circuit for ground state approximation is inspired by quantum simulation of Eq. \eqref{eq:Schrodinger} under the control policy above using Trotterization~\cite{trotter1959product}. Each ``layer'' of FALQON involves time evolution operators of the form $U_p = e^{-iH_p\Delta_t}$ and $U_d = e^{-iH_d\beta_k\Delta_t}$, where $k = 1, 2, ..., f$ denotes the FALQON layer and $\Delta_t$ represents the step size, resulting in a circuit of the form $U_d(\beta_f)U_p...U_d(\beta_1)U_p$. 

FALQON can be implemented through the following strategy:
\begin{enumerate}
	\item \label{step:FALQONAlgoStep1} Initialize qubits to a state $\ket{\psi_0}$. Set $\beta_1 = -A_0$, where the subscript denotes the circuit layer, and append $U_p$ and $U_d(\beta_1)$ to the quantum circuit.
	\item \label{step:FALQONAlgoStep2} For each subsequent step $k$, run the current quantum circuit and perform measurements to estimate $A_{k-1} = \langle \psi_{k-1}|i[H_d,H_p]|\psi_{k-1}\rangle$. Set $\beta_{k} = -A_{k-1}$ and append layer $k$ (consisting of $U_p$ and $U_d(\beta_{k})$) to the circuit.
	\item \label{step:FALQONAlgoStep3} Terminate after a fixed number of layers $f$.
\end{enumerate}
The premise of adaptively growing a quantum circuit is central to FALQON, and has previously been utilized in other algorithms, notably ADAPT-VQE and its variations \cite{grimsley2019adaptive,PRXQuantum.2.020310,PhysRevResearch.4.033029,PhysRevResearch.6.013254,x8g1-7h1k,sambasivam2025tepidadaptadaptivevariationalmethod}, algorithms for quantum imaginary time evolution \cite{motta2020determining,rw81-k8vk}, algorithms based on Riemannian gradient descent \cite{PhysRevA.107.062421,PhysRevResearch.5.033227,ht2m-1j91}, and algorithms utilizing layer-wise variational parameter optimization \cite{skolik2021layerwise,PhysRevA.104.L030401}. A distinguishing feature of FALQON against this backdrop is its explicit connection to quantum Lyapunov control theory, and compared to many other algorithms, its lack of any classical optimization requirement. Since the initial development of FALQON, a variety of new ideas, modifications, and extensions have enabled new theoretical and numerical analyses, performance improvements, reduced costs, and broader applicability \cite{PhysRevResearch.6.033336, clausen2023measurement, PhysRevResearch.6.043068, abdul2024adaptive, brady2024focqsfeedbackoptimallycontrolled, snht-7jsf, PhysRevResearch.7.013035, qc91-5mj2, vanlong2025imaginarytimeenhancedfeedbackbasedquantumalgorithms, chen2025lyapunovframeworkquantumalgorithm,  mozakka2026acceleratingfeedbackbasedalgorithmsquantum, perez2026learningparametercurvesfeedbackbased, rahman2026feedback, rattighieri2026measurementguidedstaterefinementshallow, 10.1088/2058-9565/ae7d4e, mancini2026optimalfalqonquantumapproximate}.

\begin{rem}
The same circuit structure that alternates between applications of $U_p$ and $U_d$ used in FALQON is also used in QAOA. However, in QAOA, the $U_p$ operations are dependent on an additional set of parameters, $\gamma_1,\cdots,\gamma_f$, resulting in a circuit of the form $U_d(\beta_f)U_p(\gamma_f)\cdots U_d(\beta_1)U_p(\gamma_1)$. Additionally, QAOA is an optimization-based hybrid quantum-classical algorithm, where $\braket{H_p}$ is minimized by solving for optimal values of all the circuit parameters, $\gamma_1,\cdots,\gamma_f$ and $\beta_1,\cdots,\beta_f$, via classical optimization.
\end{rem}

\subsection{Model Predictive Control}\label{sec:MPC}

Model predictive control (MPC)~\cite{garcia1989model,mayne2000constrained} is an optimization-based control strategy that operates in a {closed loop}, where control parameters are determined via on-line feedback control. The control policy determined by an MPC is the optimal value of the decision variables obtained from an optimization problem, where the objective function and constraints depend on predictions of states determined by a dynamic model over a \textit{prediction horizon} of $N$ time steps into the future. Though in principle, any reasonable model of a dynamic system could be used in an MPC, many research works utilizing MPC come from engineering fields (e.g., \cite{lages2006real,kang2009linear,biegler2015advances}) in which the system dynamic model may be written as a system of difference equations in discrete time:
\begin{equation} \label{eq:ClassOfSystems}
\begin{aligned}
\psi_{k} = f(\psi_{k-1},\beta_k)
\end{aligned}
\end{equation}
where $\psi_k$ denotes the state of the system at time step $k$ and $\beta_k$ denotes the input applied to the system at time step $k$. MPC is a \textit{receding horizon} control strategy, where the optimization problem is re-solved at every step $k$, $k=1,\ldots$ as the system operates, each time predicting over $N$ future time steps, $k,k+1,\cdots,k+N-1$. Feedback is incorporated at every time step $k$ by using a measurement of the state $\psi_{k-1}$ to initialize a dynamic model that is used to predict the system response under a sequence of inputs, denoted as follows:
\begin{equation}
    \vec{\beta}_{(i|k)} = \big[\beta(k|k), \beta(k+1|k), \cdots \beta(k+N-1|k) \big] \label{eq:betavec}
\end{equation}
which represents inputs evaluated at time step $k$ that would be implemented in the future time steps $i=k,\ldots,k+N-1$. Optimal values of the $N$ inputs, $\vec{\beta}^*_{(i|k)}$, are selected to minimize a cost function dependent on predictions of the state under the inputs. The first input in the sequence:
\begin{equation}
    \beta^*_k \equiv \beta^*(k|k) \label{eq:betastar}
\end{equation}
that evolves the state $\psi_{k-1}$ to $\psi_k$ is applied to the system. At the next time step $k+1$, new measurements are taken and the optimization problem is re-solved, again looking $N$ time steps into the future, but shifting the horizon forward to extend from time step $k+1$ to time step $k+N$. The time difference between consecutive time steps is the {step size} $\Delta_t$.

The optimization problem of an MPC for the system of Eq. \eqref{eq:ClassOfSystems} is formulated as follows:
\begin{subequations} \label{eq:MPCEqn}
	\begin{align}
	\min_{\vec{\beta}_{(i|k)}}\hspace{3mm} & \sum_{i=k}^{k+N-1} L(\tilde{\psi}_{i},\beta_{i}) \,   \label{eq:MPCEqn:Objective} \\
	\text{s.t.} \hspace{3mm} & \tilde{\psi}_{i} = f(\tilde{\psi}_{i-1},\beta_i), \, \nonumber \\
    & \quad i = \{k, k+1,...,k+N-1\}  \label{eq:MPCEqn:Model} \\
	& \tilde{\psi}_{k-1} = \psi_{k-1}  \label{eq:MPCEqn:Measurement}
	\end{align}
\end{subequations}
where the objective function to be minimized consists of the sum of values of a cost function $L(\cdot,\cdot)$, evaluated at each time step in the prediction horizon. The state predictions $\tilde{\psi}_i,~ i = \{k, k+1,...,k+N-1\}$ are obtained from the model of Eq. \eqref{eq:MPCEqn:Model}, initialized by a state measurement indicated by Eq. \eqref{eq:MPCEqn:Measurement}. An illustrative example of MPC is shown in Figure~\ref{fig:MPCFigure}.

\begin{figure}
    \centering
    \includegraphics[width=\columnwidth]{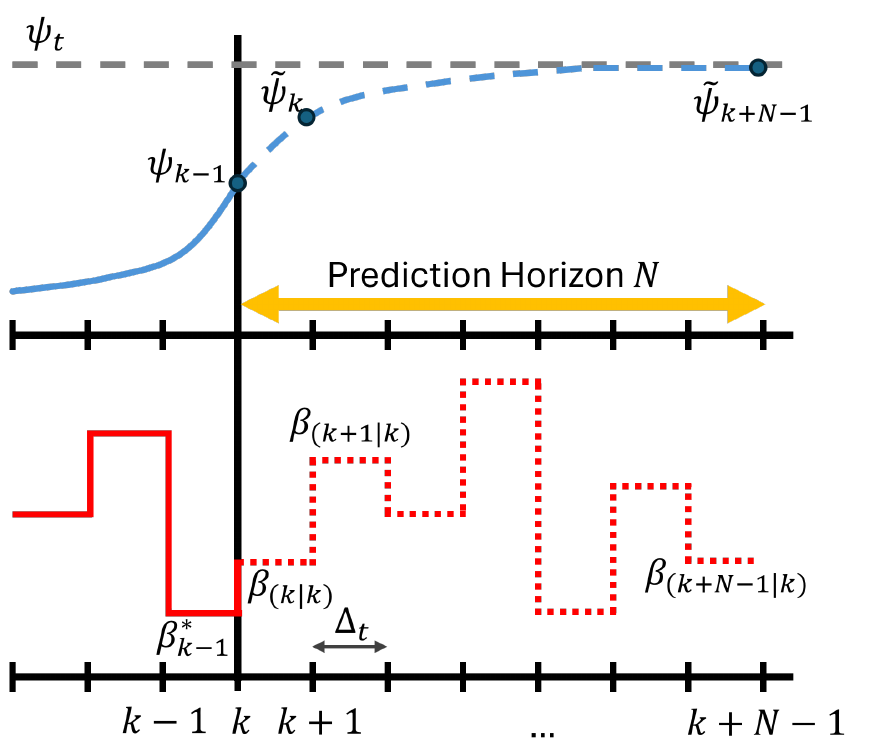}
    \caption{\textbf{Model predictive control illustration.} The state trajectory is shown above in blue and the input trajectory is shown below in red. The solid blue line indicates the state trajectory under the inputs applied to the system, shown as solid red lines, while dashed lines indicate the predicted state and input trajectories. At time step $k$, a measurement of $\psi_{k-1}$ initializes a dynamic model that predicts the evolution of the state over the prediction horizon consisting of $N$ time steps from $k$ to $k+N-1$, guiding the optimization process in solving for an input vector $\vec{\beta}_{(i|k)}$ that minimizes an objective function. Following a receding horizon strategy, the first optimal input $\beta^*_k$ is applied to the process, and the optimization problem is re-solved at the next time step $k+1$ with a measurement of $\psi_k$. This illustration shows an example of an MPC that is driving a system towards a target state $\psi_{t}$.}
    \label{fig:MPCFigure}
\end{figure}

There are several features of MPC that make it attractive in control engineering and motivate our interest in exploiting its properties to design a new control-inspired quantum algorithm: 1) Though MPC uses optimization to determine control inputs, it also is a feedback control law.  This means that the model used in the MPC does not need to be perfect for it to produce reasonable control actions in many cases, but only needs to be sufficiently accurate to enable the controller to find control actions that would minimize the objective function and meet the constraints. 2) The prediction horizon is a tuning parameter of the MPC strategy that permits the MPC to form a desired finite-time approximation to the infinite-horizon optimal control problem. In practice, it enables a tradeoff between computation time and suboptimality with respect to the infinite-horizon optimal control problem.  Longer prediction horizons might provide improved closed-loop performance at the expense of increased computation time, whereas shorter prediction horizons may result in more myopic decision-making but reduce computation time. 3) MPC can be designed to achieve a number of different behaviors due to the freedom in selecting constraints and objective functions.  For example, the objective function can be selected to encourage the system state to move toward a desired setpoint, track a predetermined reference trajectory, or optimize economic performance~\cite{amrit2011economic}. Constraints can be added to the MPC optimization problem to meet objectives or to enable conditions to be derived under which theoretical properties, such as stability of a steady-state of the closed-loop system of Eq. \eqref{eq:ClassOfSystems}, can be proven. An example of this that we extend to the MPC-based algorithm in Sec. \ref{sec:Performance} is the addition of a terminal constraint that requires the predicted state to converge to a specified state at the end of the prediction horizon~\cite{ellis2014tutorial}.

\section{Model Predictive Control-Based Quantum Algorithm}\label{sec:MPCAlgorithm}

\begin{figure*}[t]
    \centering
    \setlength\fboxsep{0pt} % No padding around the box
    \setlength\fboxrule{0.25pt} % Thickness of the box border
    \includegraphics[width=6.8in]{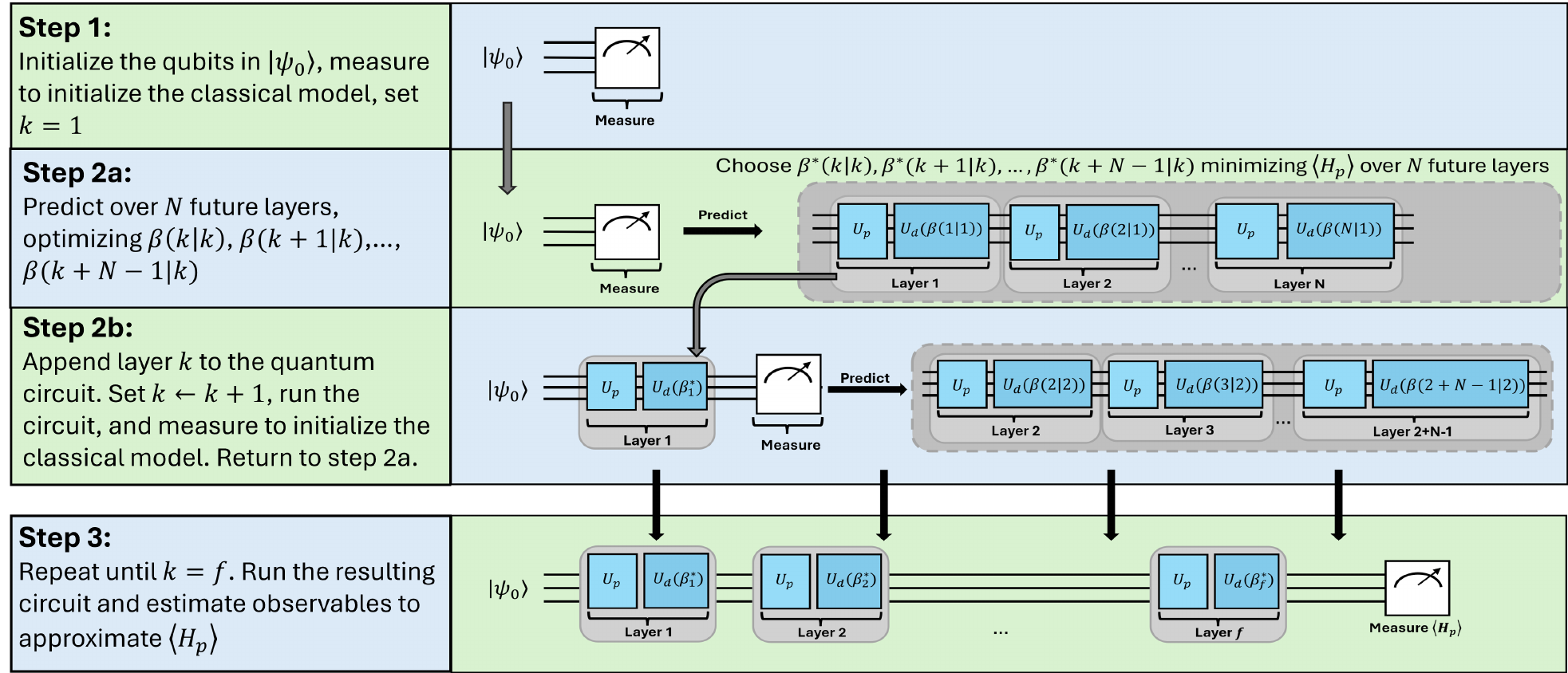}
    \caption{\textbf{MPC-based quantum algorithm procedure.} The initial step proceeds by initializing the quantum state as $\ket{\psi_0}$ and using that to initialize the classical model in the MPC used to obtain $\beta_1^*$. Layer $k = 1$ then is implemented to evolve the quantum state to $\ket{\psi_1} = U_d(\beta_1^*)U_p\ket{\psi_0}$. The quantum state $\ket{\psi_1}$ is then repeatedly measured to provide a reconstruction used to initialize the dynamic model that predicts the evolution of the quantum state over $N$ future layers. A classical optimizer solves for optimal values of $\vec{\beta}_{(i|2)}$, i.e., $\beta^*(2|2), \beta^*(3|2),\cdots,\beta^*(2+N-1|2)$, and following the receding horizon strategy, layer $k=2$ parameterized by $\beta_2^*$ is added to the circuit. This procedure is followed for every subsequent layer $k =3,4,\cdots,f$, where $\ket{\psi_0}$ is prepared and evolved by the circuit through layer $k-1$ to produce the quantum state $\ket{\psi_{k-1}} = U_d(\beta_{k-1}^*)U_p\cdots U_d(\beta_1^*)U_p\ket{\psi_0}$ which is repeatedly measured to provide a reconstruction of $\ket{\psi_{k-1}}$ that initializes the dynamic model. The model then predicts the evolution of the quantum state starting from $\ket{\psi_{k-1}}$ through layers $k,k+1,\cdots,k+N-1$, under the gate parameters $\vec{\beta}_{(i|k)}$.  A classical optimization solver is used to select optimal values of the gate parameters $\vec{\beta}^*_{(i|k)}$, and layer $k$ with parameter $\beta^*_k$ is added to the circuit. After a prescribed number of layers $f$ or after a convergence criterion is met, the quantum circuit can be sampled to approximate $\braket{H_p}_f$.}
    \label{fig:MPCQuantumAlgorithm}
\end{figure*}

Model predictive control has unique features as a control strategy, blending feedback control and optimal control concepts. The combined feedback and finite-horizon optimization capabilities of MPC suggest that it may form an alternative to FQAs in cases where optimized gate parameterizations are more desirable than the FQA policy, and VQAs in cases where optimizing over only part of the circuit reduces stalling by the classical optimizer. This positions an MPC-based algorithm as a new tool that combines the respective strengths of VQAs and FQAs, optimization-based parameter selection and integrated feedback, and has potential to handle their weaknesses such as classical optimization challenges for VQAs and excessive circuit depth for FQAs.

While the MPC-based quantum algorithm developed here can be used with any parameterized quantum circuit ansatz, in the following we work with the same ansatz used by FALQON in order to compare performance against this FQA. Our MPC-based algorithm iteratively constructs a parameterized quantum circuit in a layer-wise fashion, with each layer consisting of alternating applications of $U_p$ and $U_d$. In each iteration $k$, an optimization problem is solved where predictions of the evolution of the quantum state $\ket{\psi_{k-1}}$ over $N$ future layers with a set of parameters $\vec{\beta}_{(i|k)}$ is used to evaluate an objective function. When optimal values of $\vec{\beta}_{(i|k)}$ are found, layer $k$ with the optimal parameter $\beta_k^*$ is added to the circuit, where $\vec{\beta}_{(i|k)}$ and $\beta_k^*$ mirror the notation of Eqs. \eqref{eq:betavec} and \eqref{eq:betastar}, respectively. This results in circuits of the form $U_d(\beta^*_f)U_p...U_d(\beta^*_1)U_p$ for $f$ total layers. The base-case formulation is a straightforward extension of MPC outlined in Eq. \eqref{eq:MPCEqn} to a quantum algorithm, where the optimization problem is defined as:
\begin{subequations} \label{eq:MPCAlgorithm}
	\begin{align}
	\min_{\vec{\beta}_{(i|k)}} ~ & J_k\label{eq:MPCAlgorithm:Objective} \\
	\text{s.t.} \hspace{3mm} & \ket{\tilde{\psi}_{i}} = U_d(\beta_{i})U_p\ket{\tilde{\psi}_{i-1}}, \nonumber \\
    & \quad \quad ~i\in\{k,\ldots,k+N-1\} \label{eq:MPCAlgorithm:Model} \\
	& \ket{\tilde{\psi}_{k-1}} = \ket{\psi_{k-1}}  \label{eq:MPCAlgorithm:Measurement}
	\end{align}
\end{subequations}
where $\ket{\tilde{\psi}_i}, i \in \{k, k+1,\cdots,k+N-1\}$ denotes the predictions of the quantum state at layer $i$ made by the model in the time step where layer $k$ is added to the circuit. Eq. \eqref{eq:MPCAlgorithm:Measurement} indicates that the model of Eq. \eqref{eq:MPCAlgorithm:Model} is initialized by $\ket{\psi_{k-1}}$, where $\ket{\psi_{k-1}} = U_d(\beta^*_{k-1})U_p\cdots U_d(\beta^*_{1})U_p \ket{\psi_0}$. In Eq. \eqref{eq:MPCAlgorithm:Objective}, $J_k$ represents the objective function for the problem, which may be a function of the state predictions from Eq. \eqref{eq:MPCAlgorithm:Model} and gate parameters $\vec{\beta}_{(i|k)}$. We denote the expectation value of $H_p$ at layer $k$ as $\braket{H_p}_k = \bra{\psi_k}H_p\ket{\psi_k}$ and the predicted value of $\braket{H_p}_k$ as $\tilde{\braket{H_p}}_k = \bra{\tilde{\psi}_k}H_p\ket{\tilde{\psi}_k}$. 
The algorithm is depicted in Figure~\ref{fig:MPCQuantumAlgorithm} with a description in the caption.

As stated, the optimization problem performed at every layer, per Eq. \eqref{eq:MPCAlgorithm:Objective}, rapidly becomes intractable as the dimension of the quantum system grows. It requires full characterization of the output state at layer $k-1$ in order to set the model initial state in Eq. \eqref{eq:MPCAlgorithm:Measurement}. It also requires full state vector simulation to propagate the dynamics, per Eq. \eqref{eq:MPCAlgorithm:Model}. In Sec. \ref{sec:PauliPropMPC} we describe how this optimization can be performed in a tractable manner through the use of reduced-order models. For the remainder of this section we keep this original formulation in order to demonstrate some key properties of the MPC-based algorithm.

We note that there are two key hyperparameters in the algorithm, as defined above and in Fig. \ref{fig:MPCQuantumAlgorithm}. The first is $\Delta_t$, which defines the ``time step'' since $U_d$ and $U_p$ are evolutions proportional to this parameter, and the second is $N$, which defines the prediction time horizon. The fact that MPC performs a non-local optimization over a finite prediction horizon means that unlike for FALQON and other FQAs, the choice of $\Delta_t$ cannot be informed by considerations of preserving local continuity and minimizing Trotter error \cite{magann2022feedback, magann2022lyapunov}. Instead, the choice of these parameters is heuristic and in the numerical demonstrations in Sec. \ref{sec:Hyperparameters}, we explore approaches for choosing these hyperparameters.

\subsection{Choice of objective function}
\label{sec:objfxn}
The choice of the objective function $J_k$ in Eq. \eqref{eq:MPCAlgorithm} is an important consideration. For the circuit ansatz used in the specification of the MPC-based quantum algorithm above, the goal is to minimize the expectation value of the problem Hamiltonian, $\braket{H_p}$. To achieve this, the MPC could consider a variety of objective functions. One is a \emph{terminal cost} function, resulting in minimization of $\langle H_p \rangle$ at the end of the prediction horizon, which takes the following form:
\begin{equation}
    J_{T,k} = \tilde{\braket{H_p}}_{k+N-1}. \label{eq:TerminalCost}
\end{equation} 
We refer to the MPC-based algorithm under this objective function as the \emph{terminal cost MPC}. Another idea is to select an objective function more closely resembling the objective function used in traditional applications of MPC, which seeks to minimize the sum of the deviations of the system state from a target value over the prediction horizon. In the MPC-based algorithm, a target quantum state is not known, but penalizing a sum of expectation values of $H_p$ at the states produced at the end of each layer in the prediction horizon achieves a similar effect. The resulting {cumulative cost} has the following form:
\begin{equation}
    J_{C,k} = \sum_{i=k}^{k+N-1}\tilde{\braket{H_p}}_{i}. \label{eq:CumulativeStageCost}
\end{equation}
We refer to the MPC-based algorithm under this objective function as the \emph{cumulative cost MPC}.

This freedom in defining the MPC objective function can be exploited to tune the performance of the algorithm. In Sec. \ref{sec:numerics}, we numerically compare the two objective functions defined above to demonstrate how this choice affects performance, with subsequent analyses focusing primarily on the terminal cost objective function, Eq. \eqref{eq:TerminalCost}. We leave it as an avenue for future work to systematically explore the range of possible objective functions.

\subsection{A lower bound on performance}
\label{sec:Performance}
Before numerically demonstrating the MPC-based quantum algorithm defined above, we first rigorously derive a lower bound on its performance. We specifically prove that the MPC-based algorithm can be guaranteed to at least match FALQON in terms of objective function value at a fixed number of layers. This can be achieved through modifying the MPC optimization problem by adding a constraint based on the state that would be achieved by FALQON at the end of the prediction horizon, which we term the \textit{terminal constraint}.

We first define notation for the quantum state predictions under FALQON and under the MPC-based algorithm with the terminal constraint.  Denoting the quantum state at layer $k$ as $\ket{\phi_k}$ and gate parameter under FALQON as $\nu_{k}$, we use the following notation to signify the evolution of the quantum state under FALQON initialized at $\ket{\phi_0}$:
\begin{equation}
    \begin{aligned}
        \ket{\phi_{k}} & = U_d(\nu_{k})U_p\ket{\phi_{k-1}} \\
        \nu_{k} & = -\braket{\phi_{k-1}|i[H_d,H_p]|\phi_{k-1}}\label{eq:FALQONDiscrete}
    \end{aligned}
\end{equation}
We define the state of the quantum system operated under the terminal constraint MPC-based algorithm as $\ket{\psi_{k}}$, initialized at $\ket{\psi_{0}} = \ket{\phi_0}$. Therefore, the state at the end of layer $k$ under the MPC is given by:
\begin{equation}
\begin{aligned}
    \ket{\psi_{k}} & = U_d(\beta^*_{k})U_p\ket{\psi_{k-1}} \\
    \beta^*_{k} & = \beta^*(k|k) \label{eq:NominalDiscrete}
\end{aligned}
\end{equation}
where both algorithms use the same step size $\Delta_t$.

Using this notation, the terminal constraint MPC-based algorithm optimization problem is defined as follows:
\begin{subequations} \label{eq:LMPCAlgorithm}
	\begin{align}
	\min_{\vec{\beta}_{(i|k)}} ~ & J_k \label{eq:LMPCAlgorithm:Objective} \\
	\text{s.t.} \hspace{3mm} & \ket{\tilde{\psi}_{k}} = U_d(\beta_{k})U_p\ket{\tilde{\psi}_{k-1}} \label{eq:LMPCAlgorithm:Model} \\
	& \ket{\tilde{\psi}_{k-1}} = \ket{\psi_{k-1}}  \label{eq:LMPCAlgorithm:Measurement} \\
    & \ket{\tilde{\psi}_{k+N-1}} = \ket{\phi_{k+N-1}}
    \label{eq:LMPCAlgorithm:TerminalConstraint} 
	\end{align}
\end{subequations}
where $\ket{\tilde{\psi}_i}, i \in \{k, k+1,\cdots,k+N\}$ denotes the predictions of the quantum state at layer $i$ operated under the terminal constraint MPC-based algorithm.  Compared to Eq. \eqref{eq:MPCAlgorithm} the additional constraint of Eq. \eqref{eq:LMPCAlgorithm:TerminalConstraint} requires that the predicted quantum state at the end of the prediction horizon under the MPC-based algorithm, $\ket{\tilde{\psi}_{k+N-1}}$, must be equal to the state under FALQON after layer $k+N-1$, $\ket{\phi_{k+N-1}}$.

A key aspect of the terminal constraint MPC strategy is the use of a shrinking prediction horizon, where the horizon decreases in the last $N$ sampling periods before time step $f$, i.e., $N_k = N - j + 1$ and $j = k - f + N$ for $k \in \{f-N+1,\dots,f\}$, such that the MPC does not predict beyond layer $f$. Regardless of the choice of the objective function of Eq.~\eqref{eq:LMPCAlgorithm:Objective}, the terminal constraint then requires that $\ket{\psi_f} = \ket{\phi_f}$, so that the value of $\langle H_p \rangle_f$ under the MPC will be the same as that under FALQON.  If the conditions under which FALQON results in a monotonic decrease of $\braket{H_p}$ are met~\cite{magann2022lyapunov}, $\ket{\phi_f}$ is the state with the most optimal value of $\braket{H_p}$ prepared by FALQON. In this case, it is guaranteed that the terminal constraint MPC will find the most optimal value of $\braket{H_p}$ that is found under FALQON in the same number of layers. A feature of the cumulative cost objective function of Eq.~\eqref{eq:CumulativeStageCost} is that it incentivizes a decrease in $\braket{H_p}$ over the prediction horizon, potentially motivating an improvement of $\braket{H_p}$ in the MPC prior to layer $f$ compared to the final value obtained by FALQON.

Motivated by this, under the terminal constraint MPC of Eq. \eqref{eq:LMPCAlgorithm} and cumulative cost objective function of Eq.~\eqref{eq:CumulativeStageCost}, the following theorem, adapted from~\cite{durand2016economic}, guarantees that the cumulative sum of $\braket{H_p}$ over $f$ total layers is upper bounded by the same measure under FALQON.

\begin{thm}\label{thm:terminalconstraintLMPC}
Consider the system of Eq. \eqref{eq:NominalDiscrete} under the terminal constraint MPC-based algorithm of Eq. \eqref{eq:LMPCAlgorithm} based on FALQON with the cumulative cost objective function of Eq. \eqref{eq:CumulativeStageCost}. Let $\Delta_t > 0$. For any strictly positive finite integer $f$, when a shrinking horizon is employed over the last $N$ sampling periods before time step $f$, the closed-loop performance of the terminal constraint MPC-based algorithm is bounded as follows:
\begin{equation}
\sum_{k=1}^{f} \langle \psi_{k} | H_p|\psi_{k} \rangle_{\beta_k^*} \leq \sum_{k=1}^{f} \langle \phi_{k} | H_p|\phi_{k} \rangle_{\nu_k} \label{eq:prop1Option2}
\end{equation}
\end{thm}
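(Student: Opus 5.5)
The proof will be the standard ``recursive feasibility plus telescoping'' argument for economic MPC with a terminal equality constraint, with FALQON acting as the reference trajectory. Write $V_k$ for the optimal value of the cumulative cost problem of Eq. \eqref{eq:LMPCAlgorithm} solved at layer $k$. The horizon is $N_k$, which equals $N$ until the shrinking phase. Let $\ell(\psi)=\langle\psi|H_p|\psi\rangle$.

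The first step is to show recursive feasibility by induction on $k$, establishing at the same time that each $\ket{\psi_{k-1}}$ lies on a feasible path to $\ket{\phi_{k+N_k-1}}$.
\begin{itemize}
\item At $k=1$, the FALQON parameters $\nu_1,\dots,\nu_N$ are feasible, since $\ket{\psi_0}=\ket{\phi_0}$ and the prediction model of Eq. \eqref{eq:LMPCAlgorithm:Model} coincides with Eq. \eqref{eq:FALQONDiscrete}.
\item Suppose the problem at layer $k$ has optimal solution $\beta^*(k|k),\dots,\beta^*(k+N_k-1|k)$, reaching $\ket{\phi_{k+N_k-1}}$. The model is exact, so $\ket{\psi_k}=\ket{\tilde\psi_k}$.
\item Outside the shrinking phase, the shifted sequence $\beta^*(k+1|k),\dots,\beta^*(k+N-1|k),\nu_{k+N}$ is feasible at layer $k+1$. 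It reproduces the tail of the predicted trajectory, arrives at $\ket{\phi_{k+N-1}}$, and then one FALQON step gives $\ket{\phi_{k+N}}$.
\item Inside the shrinking phase ($N_{k+1}=N_k-1$), the truncated tail alone is feasible.
\end{itemize}
I will check that the horizon bookkeeping $N_k=N-j+1$ indeed makes the terminal index equal $f$ throughout the shrinking phase, so that the last problem, at $k=f$, has horizon $1$ and forces $\ket{\psi_f}=\ket{\phi_f}$.

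The second step bounds the change in optimal value between consecutive layers. Evaluating the shifted candidate's cost and using optimality at $k+1$ gives the following.
\begin{itemize}
\item Outside the shrinking phase: $V_{k+1}\le V_k-\ell(\psi_k)+\ell(\phi_{k+N})$.
\item Inside the shrinking phase: $V_{k+1}\le V_k-\ell(\psi_k)$.
\end{itemize}
Rearranging, $\ell(\psi_k)\le V_k-V_{k+1}+\ell(\phi_{k+N})$, with the last term absent in the shrinking phase. At $k=f$ we simply have $\ell(\psi_f)=V_f$.

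The third step is to sum these inequalities over $k=1,\dots,f$. The $V$ terms telescope to $V_1$, and the FALQON terms contribute $\sum_{k=N+1}^{f}\ell(\phi_k)$. By optimality at $k=1$ with the FALQON sequence as a feasible candidate, $V_1\le\sum_{k=1}^{N}\ell(\phi_k)$. Combining these yields Eq. \eqref{eq:prop1Option2}.

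If $f<N$, the shrinking horizon applies from the start, with terminal index $f$. The same argument then goes through with only the $V_1$ bound and the telescoping.

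I expect the main obstacle to be the index bookkeeping at the transition into the shrinking horizon. At that layer the shifted candidate must end exactly at $\ket{\phi_f}$ with no appended FALQON step, and I must make sure no $\ell(\phi)$ term is double-counted or dropped. I will verify this by listing the terms explicitly for a small case such as $N=2$, $f=3$. I will also note explicitly that existence of minimizers is taken for granted, or replaced by ``any solution no worse than the warm start''.
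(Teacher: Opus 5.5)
Your proposal is correct and is essentially the paper's own argument (Part 2 of the appendix proof). It uses FALQON feasibility at $k=1$, the shifted tail plus $\nu_{k+N}$ as the candidate before the shrinking phase and the truncated tail during it, the resulting one-step value decrease, and telescoping with $V_f=\ell(\psi_f)$ and $V_1\le\sum_{k=1}^{N}\ell(\phi_k)$. Your single-pass version is slightly cleaner, since it skips the paper's intermediate fixed-horizon bound (and the nonnegativity assumption on $H_p$ used there) and covers $f<N$ explicitly; your anticipated transition issue does not arise, because with $N_k=N-j+1$ the step from $k=f-N$ to $k=f-N+1$ still appends $\nu_f$, and your case split on $N_{k+1}=N_k-1$ already handles this.
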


A detailed proof is presented in Appendix~\ref{app:LMPCProof}, but here, we indicate why the bound in Eq. \eqref{eq:prop1Option2} is reasonable conceptually. The terminal constraint requires the predicted quantum state to be the state under FALQON at the end of the prediction horizon. 
At layer 1, the parameterization under FALQON is feasible and satisfies the terminal
constraint. 
If this baseline solution is selected by the MPC, then $\beta^*(p|1) = \nu_p~\forall ~p \in  \{ 1,\dots, N \}$ and $\ket{\psi_p} = \ket{\phi_p}~\forall ~p \in \{ 1,\dots, N \}$ such that $\sum_{i=1}^{N} \braket{\tilde{\psi}_{i}|H_p|\tilde{\psi}_{i}}_{\beta^*(i|1)} = \sum_{i=1}^{N} \braket{\phi_{i}|H_p|\phi_{i}}_{\nu_i}$, providing an upper bound on the sum of $\braket{H_p}_p$ from layer $p=1$ to layer $p=N$. At layer 1, the terminal constraint MPC seeks to find a gate parameterization that results in a more optimal value of the cumulative cost than the parameterization under FALQON for layers $1$ through $N$ while satisfying the terminal constraint. If a more optimal set of parameters is found at layer 1 where $\beta^*(p|1) \neq \nu_p$, then $\ket{\psi_p} \neq \ket{\phi_p}$, $p \in \{ 1,\ldots,N \}$, and the parameterization under FALQON is no longer a valid solution to the optimization problem at layer $2$. However, the optimal solution from layer $1$, $\vec{\beta}_{(i|1)} = \vec{\beta}^*_{(i|1)}, ~i \in \{2, \dots,N-1\}$ is a valid solution for layers $2$ through $N-1$, and since the terminal constraint required $\ket{\tilde{\psi}_{N}} = \ket{\phi_{N}}$ at layer $1$, the parameter under FALQON at layer $N+1$, i.e., $\beta(N+1|2) = \nu_{N+1}$, is a solution for the last layer in the prediction horizon that satisfies the terminal constraint. This parameterization then provides a baseline solution to the optimization problem at layer $2$, where again a more optimal set of parameters is sought.  Applying this recursively, the MPC-based algorithm seeks to find a qubit state trajectory that further minimizes Eq. \eqref{eq:LMPCAlgorithm:Objective} compared to the trajectory under FALQON from layer $k$ to $k+N-1$ while satisfying the terminal constraint. As a result, it is possible that states with lower values of $\braket{H_p}$ may be prepared, and these states may be prepared with reduced numbers of layers relative to FALQON. This can be encouraged by prescribing a large prediction horizon, where the MPC-based algorithm can utilize more layers in the beginning of the horizon decreasing $\tilde{\braket{H_p}}$ before using the remaining layers to evolve the predicted state such that the constraint is met.

\section{Reduced-order models for system dynamics}
\label{sec:PauliPropMPC}
The MPC-based quantum algorithm formulated in Eq. \eqref{eq:MPCAlgorithm} requires predicting the output state of each layer in the prediction horizon under the dynamic model as the circuit parameters are optimized. Completely characterizing the output state and accurately propagating the dynamic model rapidly become intractable with growth in the number of qubits in the model, $n$. A strategy for carrying out this classical simulation is to develop reduced-order models for the system. There are several possibilities for doing this, and in this section we describe a promising approach based on Pauli propagation \cite{rudolph2025pauli} (or Sparse Pauli Dynamics (SPD)~\cite{beguvsic2025real}) and classical shadows \cite{huang2020predicting}. 

The key observation is that the MPC optimization ultimately only requires the predicted value of the problem Hamiltonian at time points in the prediction horizon, $\braket{\tilde{H}_p}_i$ for $i\in\{k, ... ,k+N-1\}$. For terminal cost MPC, only the value $\braket{\tilde{H}_p}_{k+N-1}$ is required. The first step in exploiting this observation is to decompose $H_p$ in the Pauli basis, as
\begin{equation}
    H_p = \sum_{P \in \mathcal{P}} a_{P} P,
    \label{eq:HpPaulis}
\end{equation}
where $\mathcal{P}$ is the set of $n$-qubit Pauli operators and $a_{P} = {\textrm{tr}}(H_p P)$. Then, $\braket{H_p}_{k+N-1} = \sum_{P \in \mathcal{P}} a_{P} \braket{P}_{k+N-1}$, and as a result, predicting the expectation value of $H_p$ can be transformed into the task of predicting the expectation value of the Pauli operators $P$. Notably, for common problem Hamiltonians that derive from optimization problems or spin models, the number of Pauli operators in this expansion (or equivalently, the number of weights $a_P\neq 0$) is $O(\textrm{poly}(n))$. 

To predict the expectation values of the Pauli operators, we can propagate them \textit{backwards} through the circuit. To see this, let us explicitly write out the expectation value of a Pauli operator: 
\begin{equation}
\begin{aligned}
    {\braket{P}}_{k+N-1} &=
    \bra{\psi_{k-1}}U_p^\dagger U_d^\dagger(\beta_{k})\cdots 
    U_p^\dagger U_d^\dagger(\beta_{k+N-1})~P \\
    &\quad\quad \times~U_d(\beta_{k+N-1})U_p \cdots U_d(\beta_{k})U_p\ket{\psi_{k-1}}. \label{eq:PauliPropEq1}
\end{aligned}
\end{equation}
The unitaries conjugating the Pauli can be written as products of parametrized unitaries generated by $n$-qubit Pauli operators, i.e., $U_\sigma(\theta) = e^{-i\theta \sigma}$, for $\sigma \in \mathcal{P}$. For example, for $H_p=c_1Z_1Z_2 + c_2Z_3Z_4$, we obtain $U_p=e^{-i\Delta_t c_1 Z_1Z_2}e^{-i\Delta_t c_2 Z_3Z_4}$. Conjugation of the Pauli operator $P$ by such unitaries can be computed by using properties of the Pauli group, 
\begin{equation}
    \begin{aligned}
        U_\sigma^\dagger(\theta)P U_\sigma(\theta) = \begin{cases}
            P & \text{if}~[\sigma, P] = 0 \\
            \cos(2\theta)P + i\sin(2\theta)\sigma P & \text{otherwise} \label{eq:PauliPropagation}
        \end{cases}
    \end{aligned}
\end{equation}
Note that $\sigma P \in \mathcal{P}$, and so this conjugation results in a linear combination of Pauli operators. 

Repeating this calculation allows one to express the expectation value of the Pauli operator under the state at layer $k+N-1$ in terms of a linear combination of Pauli operator expectation values under the state at layer $k-1$:
\begin{equation}
\begin{aligned}
    {\braket{P}}_{k+N-1} &= \sum_{P \in \mathcal{P}} c_P(\Delta_t, \beta_k,...\beta_{k+N-1}) \braket{P}_{k-1},
    \label{eq:Pauliexp}
\end{aligned}
\end{equation}
with $c_P(\Delta_t, \beta_k,...\beta_{k+N-1}) \in \mathbb{C}$. This expansion has $\textrm{poly}(n)\times 2^{O(N)}$ terms, and it can be computed in $O(e^N)$ time. This technique trades the exponential cost in $n$ that a state vector simulation of the model would incur for an exponential cost in $N$, the prediction horizon. Once this expansion is computed for every Pauli operator in the expansion of $\braket{H_p}_{k+N-1}$, per Eq. \eqref{eq:HpPaulis}, $\braket{H_p}_{k+N-1}$ can be computed in terms of the Pauli expectation values $\braket{P}_{k-1}$. 
The latter quantities can be measured directly under the state $\ket{\psi_{k-1}}$. For small prediction horizons, these expectations can be efficiently measured using classical shadows \cite{huang2020predicting} or its deterministic version \cite{PhysRevLett.127.030503} if the weight of the Paulis entering the expansion Eq. \eqref{eq:Pauliexp} is small. An explicit calculation of the Pauli propagation-based calculation of $\braket{H_p}$ for a Max-Cut example is given in Appendix~\ref{app:PauliPropExample}.

The procedure outlined above is a strategy for evaluating the objective function of the MPC optimization that avoids direct state vector simulation. However, it still incurs an exponential cost with the length of the prediction horizon, $N$. There are two types of exponential growth; first, the number of terms in the sum Eq. \eqref{eq:Pauliexp} can grow exponentially in $N$, and second, the weight of the Pauli terms in Eq. \eqref{eq:Pauliexp} can scale in the worst case as $O(2^N)$.
In order to mollify this cost, one can turn to truncation strategies. Two common truncation strategies are to remove terms from the expansion in Eq. \eqref{eq:Pauliexp} if (i) $|c_P(\Delta_t, \beta_k,...\beta_{k+N-1})| < \epsilon_{\rm th}$, or (ii) $w(P) > w_{\rm th}$, where $w(P)$ is the weight of the Pauli operator $P$ (number of non-identity terms). The thresholds $\epsilon_{\rm th}$ and $w_{\rm th}$ have to be chosen empirically. Carrying out such truncations can reduce the complexity of the predictive model and the number of Pauli operators to measure using classical shadows. However, the cost of this is a reduction in the accuracy of the model, which may need to be empirically evaluated. Future work can determine the effectiveness of truncations in enabling the predictive accuracy balanced with the computational tractability required for effective MPC.

We have outlined a reduced-order model that utilizes Pauli propagation with truncation and classical shadows to deliver a more efficient method for evaluating the MPC predictive model. The computational cost of this reduced-order model increases exponentially with the length of the predictive horizon, $N$, while we expect that the performance of MPC control policies to be more effective with increasing $N$. This presents a critical trade-off that must be negotiated for effective deployment of MPC-based quantum algorithms. The integration of high-performance computation (HPC) resources with quantum computers and the execution of the Pauli propagation reduced-order model on specialized HPC units (e.g., GPUs) is one way to extend the MPC prediction horizon.

\section{Numerical demonstrations}
\label{sec:numerics}
In this section, we present several numerical simulations demonstrating the ideas developed in previous sections and exploring the performance of the MPC-based quantum algorithm. We focus on two classes of problems widely studied in the optimization and quantum simulation literature. 

The first is a graph-based combinatorial optimization problem, \emph{Max-Cut}, which corresponds to finding a bipartition of a graph that maximizes the weight of the edges (or number of edges for unweighted graphs) connecting the partitions. The Max-Cut problem is an NP-hard combinatorial optimization problem that has been investigated extensively with classical algorithms~\cite{goemans1995improved} as well as with quantum algorithms such as QAOA~\cite{farhi2014quantumapproximateoptimizationalgorithm} and FALQON~\cite{magann2022feedback}. Quantum algorithms convert this optimization problem into an energy minimization task, where the solution of the problem is encoded into the ground state of a problem Hamiltonian $H_p$, defined on $n$ qubits for an $n$-node graph, as 
\begin{equation}
    H_p = \sum_{i,j\in\mathcal{E}}w_{ij}Z_iZ_j
    \label{eq:Hpdefmaxcut}
\end{equation}
where $w_{ij}$ are the edge weights and $\mathcal{E}$ is the edge set. For unweighted graphs $w_{ij}=1, \forall~i,j$.
In this work we use simulated instances of Max-Cut problems on graphs listed in Appendix~\ref{sec:MaxCutWeights}. For weighted graphs we use weights $w_{ij}$ sampled from a uniform random distribution between 0 and 2.

The second example we study is the transverse-field Ising model (TFIM), which is a model of nearest-neighbor interacting spins on a lattice subject to an external magnetic field. The task is to find the ground state energy and prepare the ground state of this model. The problem Hamiltonian in this case is given by
\begin{equation}
    H_p = J\sum_{\braket{i,j}}Z_iZ_j + h\sum_i X_i
    \label{eq:Hpdeftfim}
\end{equation}
where $\braket{i,j}$ indicates that the sum is over nearest neighbors on the lattice, $J$ is the coupling constant between spins, and $h$ is the transverse field strength. In the numerical analyses in this section we consider a $4 \times 2$ square lattice and use $J=1$ and $h=3$, chosen due to the existence of a quantum critical point near these parameter settings~\cite{PhysRevLett.110.135702}.

For both examples, we use the same driver Hamiltonian, 
\begin{equation}
    H_d = \sum_{j=1}^n X_j\label{eq:driver}
\end{equation}
for $n$ qubits, and the initial state is chosen to be $\ket{+}^{\otimes n}$.

To assess algorithm performance, we examine the values of $\langle H_p\rangle$ as well as the approximation ratio achieved after $k$ layers,
\begin{equation}
    r_{a,k} = \frac{\braket{H_p}_k}{\braket{H_{p}}_{min}},
\end{equation} 
where $\braket{H_{p}}_{min}$ is the minimum eigenvalue. An approximation ratio $r_{a,k} = 1$ indicates the state $\ket{\psi_k}$ is the ground state. In figures where logarithmic scales are used, the offset of the approximation ratio from its ground state value is used, 
\begin{equation}
    \chi_k = 1-r_{a,k}.
\end{equation}

We remark that the choices of $H_p$ in Eqs. \eqref{eq:Hpdefmaxcut} and \eqref{eq:Hpdeftfim} both admit an indefinite spectrum. Thus, for an arbitrary state, the quantity
$r_{a,k}$ could in principle be negative. In what follows, $r_{a,k}$ and $\chi_k$ are considered only for simulations that remain in the regime where $\braket{H_p}_k$ is negative and $r_{a,k}$ thus remains between 0 and 1. 

In the following numerical simulations of the MPC-based quantum algorithm, the L-BFGS-B algorithm~\cite{byrd1995limited} is used to solve the MPC optimization problem.  For $k=1$, $\beta_1^*$ is set to the FALQON gate parameter $-A_0$.  The MPC optimization problem is first solved for $k=2$, where the initial guess for each gate parameter in $\vec{\beta}_{(i|2)}$ is 0. At subsequent layers, the initial guess for each gate parameter in $\vec{\beta}_{(i|k)}$ is the optimal parameter determined at the previous layer, i.e., $\vec{\beta}_{(i|k)} = \vec{\beta}^*_{(i|k-1)}, i \in \{k, k+1, k+N-2\}$. The initial value for $\beta(k+N-1|k)$ is set to $\beta^*(k+N-2|k-1)$.  The gate parameter values are bounded between $[-2\pi, 2\pi]$.

\subsection{MPC Demonstration} \label{sec:MPCDemo}

We begin by substantiating the claim that the MPC-based quantum algorithm can achieve performance improvements over FALQON if properly designed.  We begin in Figure~\ref{fig:Weighted4NodeMaxcut} by comparing the performance of the MPC-based quantum algorithm using the terminal cost and cumulative cost objective functions of Eqs. \eqref{eq:TerminalCost} and~\eqref{eq:CumulativeStageCost}, respectively, against the performance of FALQON for solving the Max-Cut problem on weighted 4-node graphs.  The graph is shown in Appendix \ref{sec:MaxCutWeights}, along with the randomized edge weights for each of the ten instances considered. This simulation illustrates that the terminal cost MPC generally resulted in lower values of $\chi_k$ compared to both the cumulative cost MPC and FALQON. For each of the ten Max-Cut simulations, the terminal cost MPC produced a lower value of $\chi_k$ at some layer within the circuit than the lowest value of $\chi_k$ achieved under either the cumulative cost MPC or FALQON.  In only three of the ten Max-Cut instances did the cumulative cost MPC-based algorithm achieve a lower value of $\chi_k$ at some layer within the circuit than the lowest value of $\chi_k$ achieved under FALQON.  This indicates that proper objective function selection for the MPC-based algorithm can significantly impact whether the MPC can out-perform alternative ground state preparation algorithms. Furthermore, it establishes that it is possible to design the MPC-based quantum algorithm to outperform FALQON without additional structure to the optimization problem, such as the terminal constraint of Section~\ref{sec:Performance}.

\begin{figure}
    \centering
    \begin{overpic}[width = 1.08\columnwidth]{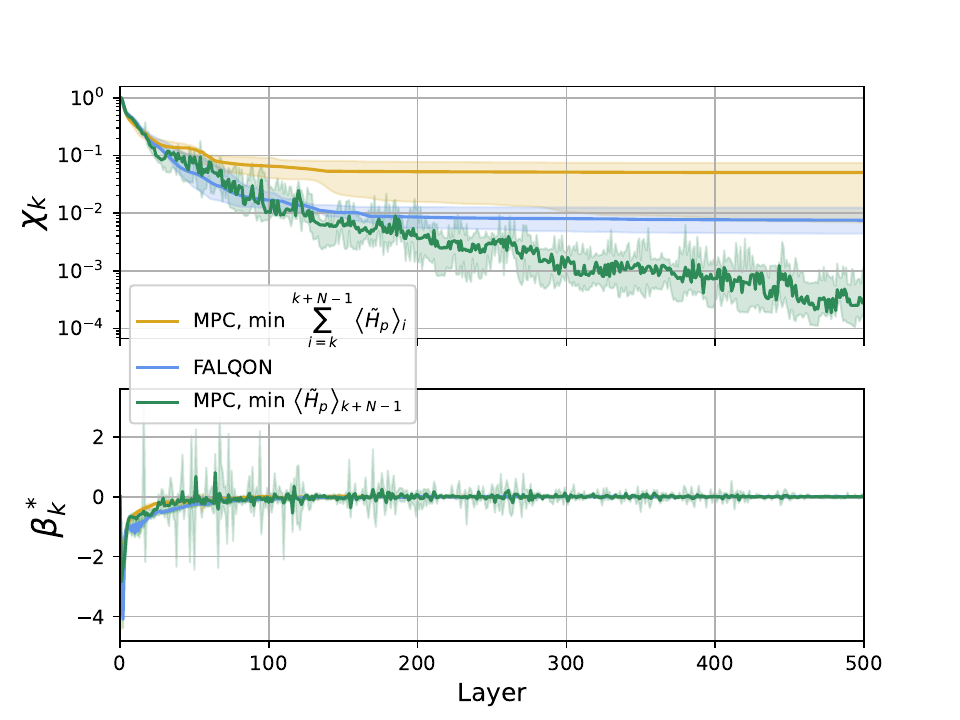}
        \put(2,68){\textbf{(a)}}
        \put(2,32){\textbf{(b)}}
    \end{overpic}
  \caption{\textbf{Performance of the terminal cost MPC and cumulative cost MPC compared to FALQON.} We simulate FALQON and the MPC-based algorithm for the Max-Cut problem formulated with the objective functions of Eqs.~\eqref{eq:TerminalCost}-\eqref{eq:CumulativeStageCost} over 10 instances of randomly weighted 4-node graphs. The median values of $\chi_k$ and of $\beta_k^*$ are plotted as solid lines in Panels (a) and (b), respectively, with the corresponding interquartile ranges shown in associated shaded regions of like colors. The step size for FALQON and for the MPC algorithm are set to $\Delta_t = 0.05$ and the prediction horizon for the MPC is set to $N = 10$.}
\label{fig:Weighted4NodeMaxcut}
\end{figure}

To provide further evidence of the potential of the MPC-based algorithm both with and without additional terminal constraints, we present a second example in Figure \ref{fig:LMPCMaxCut} that numerically demonstrates the performance of the MPC without terminal constraints compared to the terminal constraint MPC and FALQON.  This figure demonstrates the potential of an MPC without terminal constraints to achieve lower values of $\langle H_p \rangle$ than FALQON by the end of a fixed number of layers, demonstrates it may also achieve similar estimates of $\langle H_p \rangle_{\min}$ as FALQON when doing so, and confirms the role of the terminal constraint in grounding the MPC-based algorithm with the FALQON result.

\begin{figure*}
    \centering
    \begin{subfigure}{0.49\textwidth}
        \centering
        \begin{overpic}[width=1.08\columnwidth]{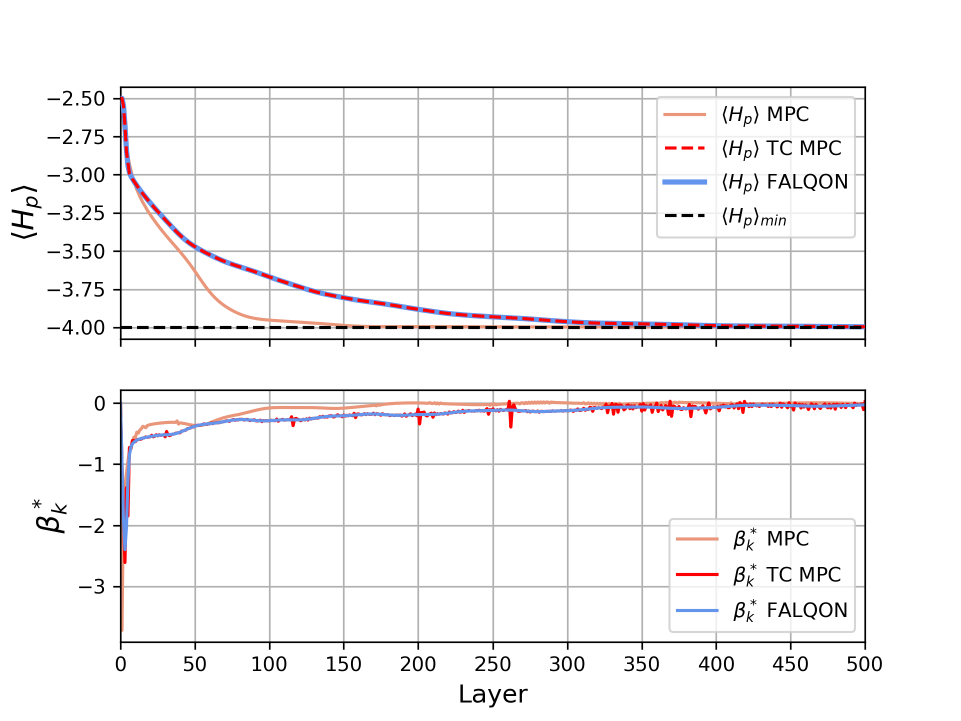}
            \put(2,68){\textbf{(a)}}
            \put(2,32){\textbf{(b)}}
        \end{overpic}
    \end{subfigure}
    \begin{subfigure}{0.49\textwidth}
        \centering
        \begin{overpic}[width=1.08\columnwidth]{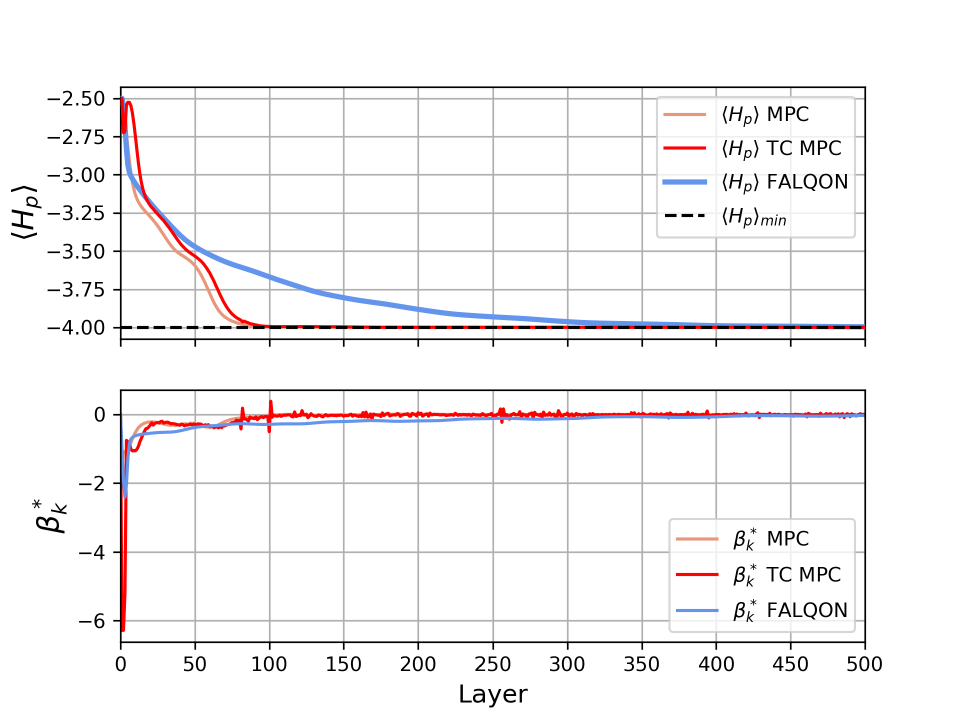}
            \put(2,68){\textbf{(c)}}
            \put(2,32){\textbf{(d)}}
        \end{overpic}
    \end{subfigure}
    \caption{\textbf{MPC-based quantum algorithm performance with $N=5$ and $N=400$.} We present two numerical examples comparing the performance of the MPC-based quantum algorithm using different prediction horizons against FALQON for solving Max-Cut on an unweighted 4-node graph. Results are plotted for both the base-case MPC of Section~\ref{sec:MPCAlgorithm} with the objective function of Eq.~\eqref{eq:LMPCAlgorithm:Objective} as well as the terminal constraint MPC. All results were obtained using a step size of $\Delta_t = 0.05$. In Panel (a), the base-case MPC (in orange) and the terminal constraint MPC (the dashed red line) both use a prediction horizon of $N=5$. Here, we see that when using a short prediction horizon, the trajectory of $\langle H_p \rangle$ under the terminal constraint MPC closely follows that under the FALQON trajectory, shown in blue. In Panel (c), when the prediction horizon is increased to $N=400$ in both the base-case MPC and the terminal constraint MPC, the trajectory of $\langle H_p \rangle$ under the terminal constraint MPC more closely resembles that under the base-case MPC trajectory. The gate parameters $\beta^*_k$ for $N=5$ are plotted in Panel (b) and for $N=400$ in Panel (d). The intent of the constraint of Eq.~\eqref{eq:LMPCAlgorithm:TerminalConstraint} was enforced in these simulations via an inequality constraint of the form $0 \leq |1-|\braket{\tilde{\psi}_{k+N-1}|\phi_{k+N-1}}|^2| \leq \epsilon$ where $\epsilon = 10^{-15}$.}
    \label{fig:LMPCMaxCut}
\end{figure*}

\subsection{Analysis of hyperparameters} \label{sec:Hyperparameters}
In this section we analyze the behavior of the MPC-based quantum algorithm as a function of the key hyperparameters $\Delta_t$ and $N$. In order to isolate the effects of these hyperparameters, we limit ourselves to small problem sizes and use full state vector simulation to evaluate the MPC model, \emph{i.e.,} no reduced-order model is used.  

The product $N\Delta_t$ represents the prediction time, and intuitively, larger values allow for longer-term dynamics to be accounted for in the prediction horizon and discourage short-sighted decisions when determining optimal gate parameters. For fixed $\Delta_t$, the larger $N$ is, the further into the future the MPC-based algorithm will predict the dynamics and the greater the number of gate parameters to optimize over will be. Thus, large prediction horizons can increase the flexibility in finding a set of parameterized gates that may produce qubit states with smaller values of $\braket{H_p}$. To demonstrate this, Figure~\ref{fig:4NodeVaryingHorizons} shows the value of $\chi_k$ evaluated at the end of each of 500 layers for an unweighted 4-node Max-Cut problem using the terminal cost MPC, for values of $N$ between 2 and 60.  For $N=2$, the MPC-based algorithm decreases $\chi_k$ in the first few layers more rapidly than it does when $N=60$, but by layer 500, the value of $\chi_k$ is suboptimal relative to larger values of $N$.  This can be understood as an initially \textit{greedy} optimization, where the MPC-based algorithm with the short horizon lacks the ability for long-term planning, potentially sacrificing long-term optimality for short-term improvements. Increases in $N$ generally encourage lower values of $\chi_k$ to be found, and in a reduced number of layers compared to smaller prediction horizons, however this trend is not strictly followed as $N$ increases. All of the trajectories of $\chi_k$ in Figure~\ref{fig:4NodeVaryingHorizons} exhibit fluctuations that increase the value of $\chi_k$ temporarily to then subsequently decrease it further.  These fluctuations can be considered to be a feature of the terminal cost MPC, in that they can enable improved performance relative to a baseline algorithm such as FALQON. For further discussion of this point, see Appendix~\ref{sec:FluctuationAnalysis}.

\begin{figure}
        \centering
        \includegraphics[width=1.08\columnwidth]{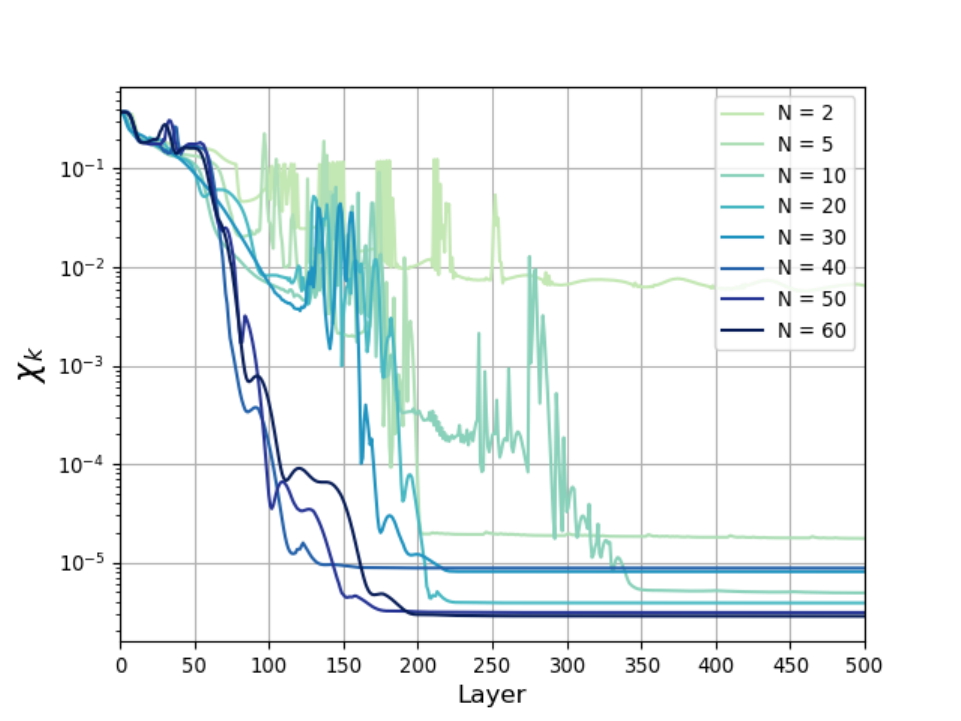}
        \caption{\textbf{Effect of varying prediction horizon.} We simulate the performance of the terminal cost MPC on an unweighted 4-node Max-Cut problem with varying prediction horizons and $\Delta_t = 0.05$ over 500 layers. Larger prediction horizons, shown in darker colors, tend to encourage more optimal values of $\chi_k$ achieved with shorter circuits relative to shorter horizons.}
        \label{fig:4NodeVaryingHorizons}
\end{figure}

We analyze the combined dependence of the terminal cost MPC on $\Delta_t$ and $N$ in Figure~\ref{fig:Heatmaps}. We observe that larger values of $N$ generally enable both higher values of the approximation ratio to be achieved across a wider range of $\Delta_t$ values, per Panel (a), as well as  approximation ratios of at least 0.97 to be achieved with shorter circuits, per Panel (b). This motivates a possible heuristic for hyperparameter selection: first, determine the largest value of $N$ for use in the MPC, depending on available computational budget; then, for this maximal allowable value of $N$, scan over a range of possible $\Delta_t$ to determine the value that produces best performance.

\begin{figure*}[t]
    \begin{subfigure}{0.49\textwidth}
        \centering
        \begin{overpic}[width=1.08\columnwidth]{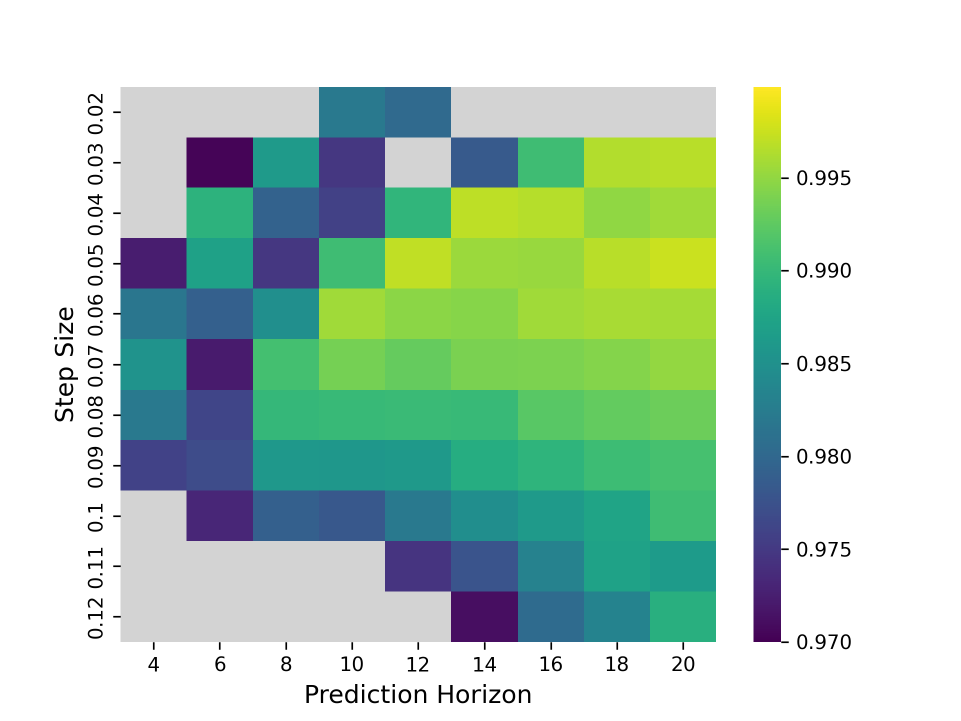}
            \put(0,65){\textbf{(a)}}
        \end{overpic}
    \end{subfigure}
    \begin{subfigure}{0.49\textwidth}
        \centering
        \begin{overpic}[width=1.08\columnwidth]{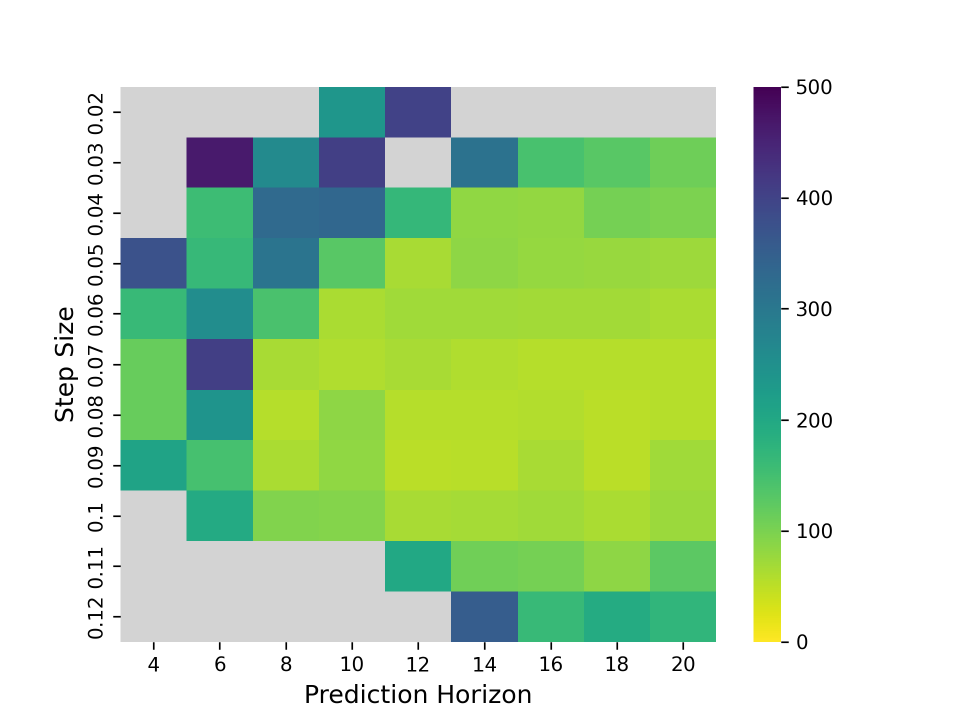}
            \put(0,65){\textbf{(b)}}
        \end{overpic}
    \end{subfigure}
    \caption{\textbf{Combined effect of prediction horizon and step size.} We simulate the terminal cost MPC with varying prediction horizons and step sizes over 500 layers for a solving a Max-Cut problem on the weighted, 6-node graph shown in Appendix~\ref{sec:MaxCutWeights}. Panel (a) shows the maximum approximation ratio achieved over 500 layers for each combination of $N$ and $\Delta_t$. Panel (b) shows the first layer with an approximation ratio of at least $0.97$.  Grey boxes indicate that an approximation ratio of 0.97 was not achieved.  Due to the fluctuations that can occur in $r_{a,k}$ between layers under the terminal cost MPC, as shown in Figure~\ref{fig:4NodeVaryingHorizons}, the maximum approximation ratio may not be achieved at the final layer, and the approximation ratio may decrease below 0.97 after the layer marked in Panel (b).}
    \label{fig:Heatmaps}
\end{figure*}

\begin{figure}
    \centering
    \includegraphics[width=1.08\columnwidth]{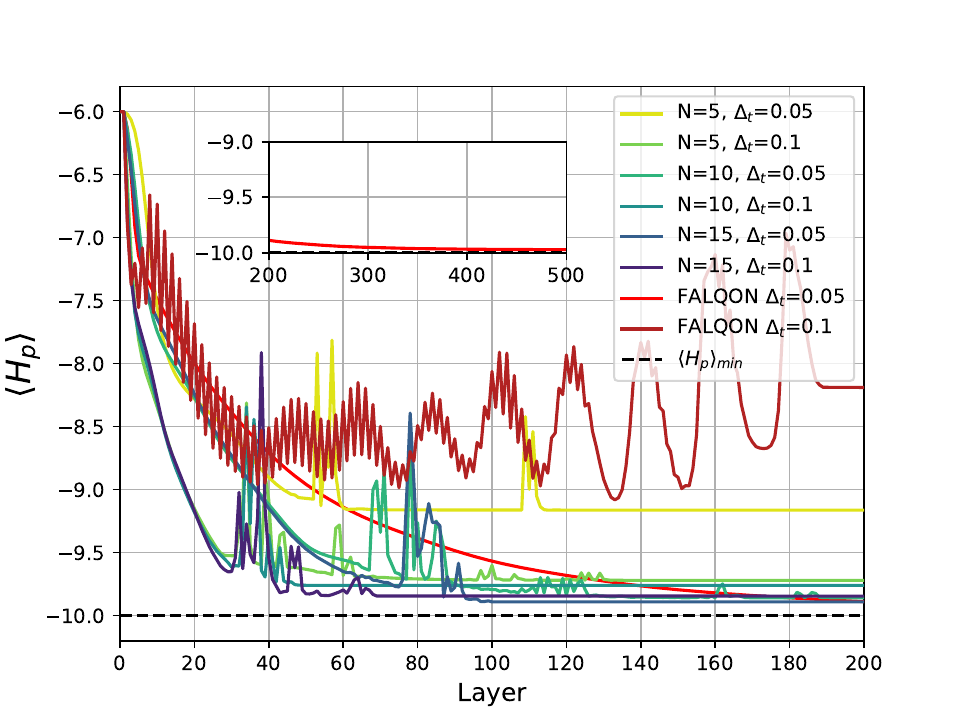}
    \caption{\textbf{MPC-based quantum algorithm with state vector simulation for 8-node Max-Cut.} We simulate the terminal cost MPC over 200 layers with state vector simulation using a range of hyperparameters $N$ and $\Delta_t$ and compare the performance with two instances of FALQON where $\Delta_t = 0.05$ and 0.1. In every instance except where $N=5$ and $\Delta_t = 0.05$, the terminal cost MPC converges to an estimate of $\braket{H_p}$ in a reduced number of layers compared to FALQON. However, the inset showing $\braket{H_p}$ over layers 200 through 500 under FALQON with $\Delta_t = 0.05$ illustrates that FALQON eventually converges closer to the ground state compared to the terminal cost MPC shown here. The oscillatory behavior of FALQON when $\Delta_t = 0.1$ is due to this value of $\Delta_t$ exceeding the critical $\Delta_t$ value that ensures monotonic descent \cite{magann2022feedback}.  The MPC that use this value of $\Delta_t$ overcome this failure mode of FALQON due to the receding horizon optimization, and still can obtain reasonable performance at this larger value of $\Delta_t$.}
    \label{fig:8NodeVaryingN}
\end{figure}

\begin{figure*}[t]
    \centering
    \begin{subfigure}{0.49\textwidth}
        \centering
        \begin{overpic}[width=1.08\textwidth]{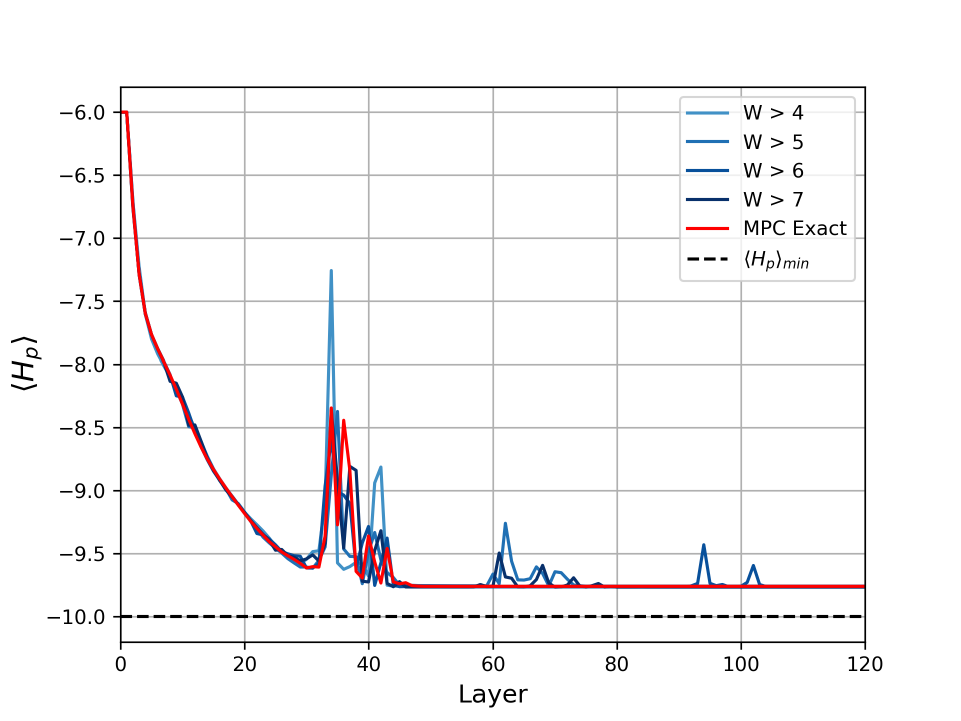} 
            \put(0,65){\textbf{(a)}}
        \end{overpic}
        \label{fig:8NodeWeights}
    \end{subfigure}
    \begin{subfigure}{0.49\textwidth}
        \centering
        \begin{overpic}[width=1.08\textwidth]{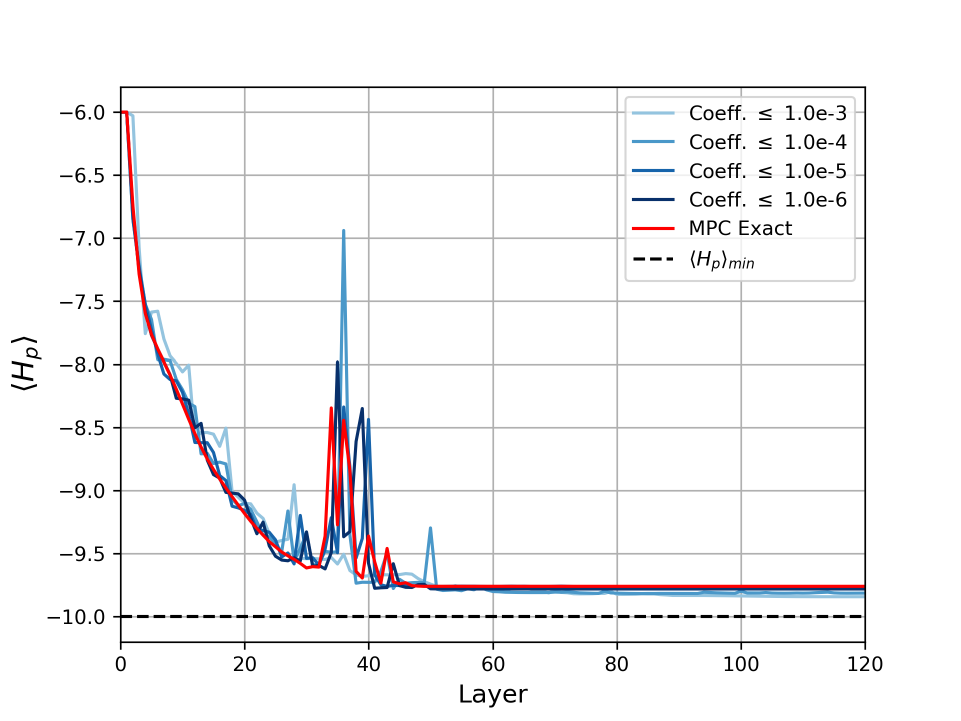}
            \put(0,65){\textbf{(b)}}
        \end{overpic}
        \label{fig:8NodeFreqs}
    \end{subfigure}
\caption{\textbf{MPC-based quantum algorithm with Pauli Propagation for 8-node Max-Cut.} We simulate the terminal cost MPC with Pauli propagation using varying weight-based and coefficient-based truncations.  All simulations use $\Delta_t = 0.1$ and $N=10$.  ``MPC Exact'' represents the trajectory of $\langle H_p \rangle$ when the MPC-based quantum algorithm employs state vector simulation in computing $\langle \tilde{\psi}_{k+N-1}| H_p | \tilde{\psi}_{k+N-1}\rangle$, and $\langle H_p \rangle_{\min}$ reflects the ground state eigenvalue of $H_p$. Panel (a) showcases the trajectories of $\langle H_p \rangle$, where $W > q$ signifies that Pauli operators with weights greater than $q$, $q = 4, 5, 6$, or 7, are truncated. Panel (b) showcases the trajectories of $\langle H_p \rangle$, where Coeff. $\leq q$ signifies that Pauli operators with coefficients having a modulus below $q = 10^{-3}, 10^{-4}, 10^{-5}$, and $10^{-6}$ are truncated.}
    \label{fig:8NodeSims}
\end{figure*}

\begin{figure}[t]
    \centering
    \includegraphics[width=1.08\columnwidth]{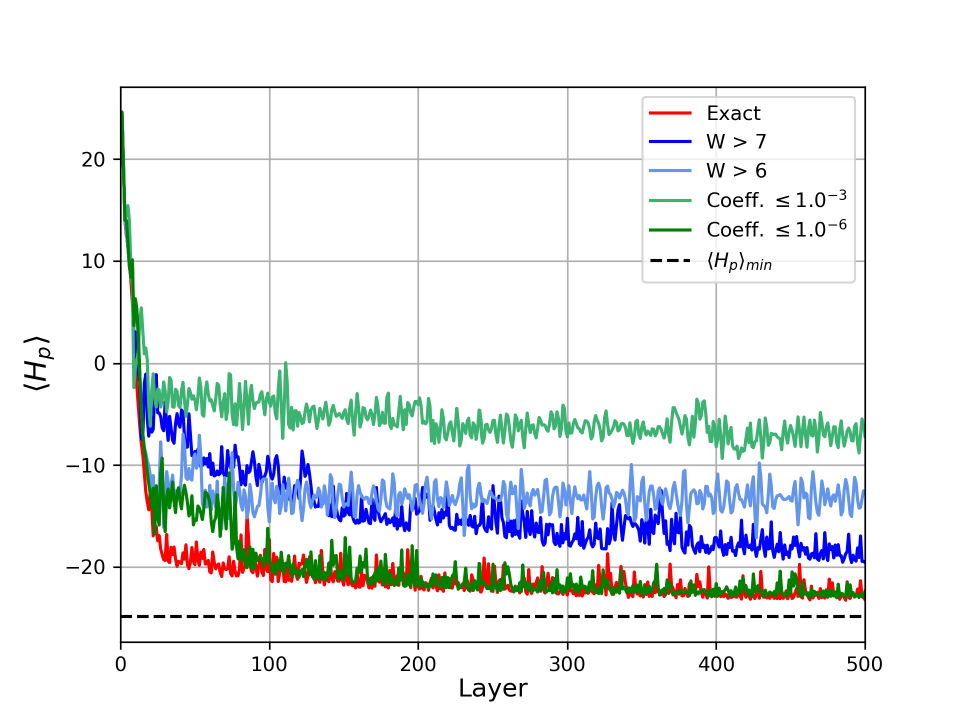}
    \caption{\textbf{Transverse field Ising model with Pauli propagation.}  We compare the performance of the terminal cost MPC using Pauli propagation with various weight-based and frequency-based truncations used to obtain $\tilde{\braket{H_p}}_{k+N-1}$ against the trajectory of the MPC-based quantum algorithm using state vector simulation, where in each case, $\Delta_t = 0.1$ and $N = 10$.  Weight-based truncations are shown in blue, where Pauli operators with weight greater than 6 and 7 (labeled $W > 6$ and $W > 7$), respectively, are trimmed. Coefficient-based truncations are shown in green, where Pauli operators with coefficients less than $10^{-3}$ and $10^{-6}$ (labeled Coeff. $\leq 10^{-3}$ and Coeff. $\leq 10^{-6}$) are trimmed, respectively. The ground state eigenvalue is denoted by $\langle H_p \rangle_{\min}$ and represented by a dashed black line.}
    \label{fig:TFIMPauliProp}
\end{figure}

\subsection{Max-Cut and TFIM case studies}
\label{sec:maxcut_tfim}
In this section, we present numerical demonstrations of the MPC-based quantum algorithm on the Max-Cut and TFIM example models discussed above. We utilize the Pauli propagation reduced-order model with weight-based and coefficient-based truncations, as discussed in Sec. \ref{sec:PauliPropMPC}, to simulate the MPC model. We utilize PauliPropagation.jl~\cite{rudolph2025pauli,rudolphSoftware} and  Optimization.jl packages for these simulations, with Nelder-Mead~\cite{nelder1965simplex} used as the optimization solver.

\subsubsection{Max-Cut}
Figure~\ref{fig:8NodeVaryingN} shows the result of applying terminal cost MPC using exact state vector simulation with a range of $N$ and $\Delta_t$ to solve Max-Cut on an 8-node graph.  The figure indicates that the MPC hyperparameters impact the degree of offset of the values of $\langle H_p \rangle$ from $\langle H_p \rangle_{\min}$ after 200 layers.  We conclude that a reasonable estimate of the ground state can be obtained from the MPC in less layers than FALQON, though FALQON may continue to improve its estimate after the 200 layers over which the MPC was tested.  Figure~\ref{fig:8NodeSims} shows the result of applying the terminal cost MPC using Pauli propagation to the same 8-node graph, where the results are compared to the trajectory in Figure~\ref{fig:8NodeVaryingN} where $N=10$ and $\Delta_t = 0.1$. In Panel (a), the trajectories that use the various weight-based truncations each reach similar values of $\langle H_p \rangle$ in a similar number of layers compared to each other and to the MPC with perfect state predictions.  A similar effect is also shown in Panel (b) when coefficient-based truncations are used, however, slightly lower values of $\langle H_p \rangle$ are found by the end of 120 layers compared to when state vector simulation was used in the MPC.  This occurs because the MPC is optimizing an objective function that depends on a model of the dynamics that is potentially different from the dynamics when the circuit is applied. This, combined with the selection of $N$ and $\Delta_t$, in this case results in improved performance compared to the use of state vector simulation, though in general we would not expect the use of a reduced-order model to improve performance of the MPC. 
The use of truncated Pauli propagation models was thus effective for performing approximate ground state preparation, demonstrating that the MPC-based algorithm can achieve good performance with reduced-order models. 

\subsubsection{Transverse-Field Ising Model}\label{sec:PauliPropTFIM}
Figure~\ref{fig:TFIMPauliProp} shows the results of applying terminal cost MPC to the $4\times 2$ TFIM. Both of the weight-based truncations failed to prepare states with values of $\langle H_p \rangle$ as low as those prepared when using a perfect model, indicating that weight-based truncations may not be suitable for all problem instances. Coefficient-based truncations that trimmed Pauli operators with a coefficient less than $10^{-3}$ also reduced the performance of the algorithm compared to using a perfect model, but with the cut-off of $10^{-6}$, the performance was more similar to using state vector simulation in the MPC.  This highlights the importance of selecting an adequate reduced-order model for pairing with the MPC-based algorithm for a given problem, and demonstrates that using coefficient-based truncations may be a suitable approach to controlling the accuracy of Pauli propagation when weight-based truncations result in poor performance.   These results provide evidence that the MPC-based algorithm can be used with reduced-order modeling strategies to estimate $\langle H_p \rangle_{\min}$ across different computational problems. They also indicate that there are key cost/accuracy tradeoffs between using weight-based and coefficient-based truncations, and we leave a systematic investigation of these tradeoffs to future work.  The results are consistent with the well-established principles in MPC that the controller may reach a similar target to that achieved with a perfect system model if the model is not perfect but is still reasonably accurate, though potentially with off-set from a target.

\section{Conclusions}
In this work, we developed a hybrid quantum-classical algorithm based on model predictive control, expanding the toolbox of variational quantum algorithms. The algorithm is equipped with a classical component used to simulate the quantum system over finite prediction horizons that can be selected to enable tractable optimization of circuit parameters. We demonstrated that with careful modification of the structure of the algorithm, it can be theoretically guaranteed to perform at least as well as a feedback-based quantum algorithm. With the potential for performance improvements established, we relaxed the structure of the algorithm and investigated the effects of the various design choices on the performance of the algorithm, including the objective function used in the optimization problem, hyperparameters such as prediction horizon length and step size, and choice of predictive model. Through mathematical analyses and simulations, we demonstrated how these design choices for the MPC-based algorithm interact with one another, and provided suggestions for tuning the algorithm to attempt to trade off between performance and classical and quantum computing resource use. In these demonstrations, we established that reduced-order modeling strategies can be effectively implemented in the algorithm to allow for tractable implementations.

The MPC-based algorithm can be designed to recover both FALQON and QAOA trajectories. Specifically, in the case that the terminal constraint MPC of Section~\ref{sec:Performance} is operated with $N = 1$, the gate parameterizations are given by FALQON.  QAOA can be recovered using the same circuit structure as QAOA, a prediction horizon of the same length as the number of layers used in QAOA, a perfect model of the quantum circuit, and globally optimal gate parameters from QAOA and MPC.  Then, the "open-loop" parameterization under the terminal cost MPC solved at $k=1$, meaning the solution that would be applied if all of the gate parameters computed when $k=1$ are implemented in the MPC-based algorithm circuit instead of only the first, gives the same parameterizations as under QAOA.   The closed-loop solution with a shrinking horizon starting from $k=1$ will also give the QAOA parameterization.  Thus, the MPC-based algorithm can be seen as a generalization of FALQON and QAOA that can be tuned to emphasize the strengths of either for specific problems. Future work can further examine performance comparisons with QAOA implementations. 

The set of algorithm design aspects that we investigated does not cover every possible design decision that may be made for the algorithm.  For example, the optimization problem of Eq. \eqref{eq:MPCAlgorithm} is solved with a classical optimizer, meaning that the selection of the classical optimization strategy and whether it performs local or global optimization, as well as the initial guess of the decision variables in the algorithm, can impact both the classical computing time required, as well as the solution quality, e.g., how closely it approximates the minimum value of $\langle H_p \rangle$. Future work can also further investigate tuning guidelines for setting up the MPC-based quantum algorithm.

\section*{Acknowledgements}

This material is based upon work supported by the U.S. Department of Energy, Office of Science, Office of Workforce Development for Teachers and Scientists, Office of Science Graduate Student Research (SCGSR) program. The SCGSR program is administered by the Oak Ridge Institute for Science and Education (ORISE) for the DOE. ORISE is managed under ORAU contract number DE-SC0014664. All opinions expressed in this paper are the author's and do not necessarily reflect the policies and views of DOE, ORAU, or ORISE. This work was supported by the Laboratory Directed Research and Development program (Project 233972) at Sandia National Laboratories, a multimission laboratory managed and operated by National Technology and Engineering Solutions of Sandia LLC, a wholly owned subsidiary of Honeywell International Inc. for the U.S. Department of Energy’s National Nuclear Security Administration under contract DENA0003525. This paper describes objective technical results and analysis. Any subjective views or opinions that might be expressed in the paper do not necessarily represent the views of the U.S. Department of Energy or the United States Government. This article has been authored by an employee of National Technology \& Engineering Solutions of Sandia, LLC under Contract No. DE-NA0003525 with the U.S. Department of Energy (DOE). The employee owns all right, title and interest in and to the article and is solely responsible for its contents. The United States Government retains and the publisher, by accepting the article for publication, acknowledges that the United States Government retains a non-exclusive, paidup, irrevocable, world-wide license to publish or reproduce the published form of this article or allow others to do so, for United States Government purposes. The DOE will provide public access to these results of federally sponsored research in accordance with the DOE Public Access Plan https://www.energy.gov/downloads/doe-public-access-plan. SAND2026-23216O. Helen Durand was employed in Summer 2024 at Sandia National Laboratories.  Financial support from Wayne State University and NSF CBET-2143469 is also gratefully acknowledged.

\bibliographystyle{IeeeTran.bst}
\bibliography{apssamp}% Produces the bibliography via BibTeX.

\appendix

\section{}

These appendices expand on the concepts and simulations presented in the main text and provide information relevant to the implementation of the MPC-based quantum algorithm.  
Appendix~\ref{app:LMPCProof} provides a proof of Theorem~\ref{thm:terminalconstraintLMPC}.  An example of Pauli propagation and the truncation strategies relevant to the MPC-based algorithm is discussed in Appendix~\ref{app:PauliPropExample}.  Appendix~\ref{sec:MaxCutWeights} provides the weights for the weighted 4-node Max-Cut simulations of Figure~\ref{fig:Weighted4NodeMaxcut} as well as an explicit form of the problem Hamiltonian used in the 6-node Max-Cut simulations of Figure~\ref{fig:Heatmaps}. In Appendix~\ref{sec:FluctuationAnalysis}, we discuss the fluctuations in the value of $\langle H_p \rangle$ across a circuit as observed under the terminal cost MPC and noted in the main text.

\subsection{Terminal Constraint MPC-based Algorithm Performance Proof}\label{app:LMPCProof}

\noindent \textbf{Proof.} This proof consists of two parts, following the closed-loop terminal constraint MPC performance proof of~\cite{ellis2016closed}. In Part 1, we prove the following bound holds for the MPC-based algorithm using a fixed prediction horizon:
\begin{equation}
    \sum_{k=1}^{f} \braket{\psi_{k}|H_p|\psi_{k}}_{\beta^*_k} \leq \sum_{k=1}^{f+N} \braket{\phi_{k}|H_p|\phi_{k}}_{\nu_k}\label{eq:prop1}
\end{equation}
In Part 2, we prove the bound of Eq. \eqref{eq:prop1Option2} holds when using a shrinking prediction horizon for the last $N$ layers before layer $f$.

\textit{Part 1.}
At layer $k$, the parameterization given by FALQON, $\beta(i|k) = \nu_i$ for $i \in \{k,\dots,k+N-1\}$ is a feasible solution to the optimization problem of Eq. \eqref{eq:LMPCAlgorithm} satisfying the constraint of Eq. \eqref{eq:LMPCAlgorithm:TerminalConstraint}. Let $\beta^*(i|k)$ for $i \in \{ k,\dots,k+N-1 \}$ be the optimal gate parameters computed at layer $k$ for the MPC-based algorithm in Eq. \eqref{eq:LMPCAlgorithm}. The optimal value of the objective function of Eq. \eqref{eq:LMPCAlgorithm:Objective} at layer $k$, denoted $J_k^*$, can be written as follows:
\begin{equation}
    J^*_{k} = \sum_{i=k}^{k+N-1}\braket{\tilde{\psi}_{i}|H_p|\tilde{\psi}_{i}}_{\beta^*(i|k)}\label{eq:k}
\end{equation}
where $\sum_{i=k}^{k+N-1}\braket{\tilde{\psi}_{i}|H_p|\tilde{\psi}_{i}}_{\beta^*(i|k)}$ signifies that values of $\ket{\tilde{\psi}_i}$, $i=k,\ldots,k+N-1$, used in evaluating the expectation value of $H_p$ at the end of each layer to form this sum come from state predictions under the gate parameter trajectory $\beta^*(i|k)$, $i=k,\ldots,k+N-1$, initialized from the state at the start of the $k$-th layer.  At layer $k+1$, the optimal parameters calculated at the previous layer, $\beta(i|k+1) = \beta^*(i|k)$ for $i \in \{k+1,\dots,k+N-1\}$ and the parameter prescribed by FALQON at layer $k+N$, $\beta(k+N|k+1) = \nu_{k+N}$, is a feasible solution to Eq. \eqref{eq:LMPCAlgorithm}. The optimal value of Eq. \eqref{eq:LMPCAlgorithm} $J^*_{k+1}$ is upper bounded by the value of Eq. \eqref{eq:LMPCAlgorithm} under a feasible solution such that:
\begin{equation}
\begin{aligned}
    J^*_{k+1} & \leq \sum_{i=k+1}^{k+N-1}\braket{\tilde{\psi}_{i}|H_p|\tilde{\psi}_{i}}_{\beta^*(i|k)} \\
    & + \braket{\phi_{k+N}|H_p|\phi_{k+N}}_{\nu_{k+N}}\label{eq:kplus1}
\end{aligned}
\end{equation}
Subtracting Eq. \eqref{eq:k} from Eq. \eqref{eq:kplus1}, the difference between the optimal values of the objective function in any two consecutive layers can be bounded as follows:
\begin{equation}
\begin{aligned}
    J_{k+1}^*-J_k^* & \leq \braket{\phi_{k+N}|H_p|\phi_{k+N}}_{\nu_{k+N}} \\
    & - \braket{\tilde{\psi}_{k}|H_p|\tilde{\psi}_{k}}_{\beta^*_k}\label{eq:difference}
\end{aligned}
\end{equation}
Let $f$ denote any positive integer representing the total number of layers. Taking the sum of the differences between the optimal values at any two consecutive layers given by Eq. \eqref{eq:difference} over the total number of layers, we obtain the following:
\begin{equation}
\begin{aligned}
    & \sum_{k=1}^{f}[J_{k+1}^*-J_k^*] \\
    & \leq \sum_{k=N+1}^{f+N}\braket{\phi_{k}|H_p|\phi_{k}}_{\nu_k} - \sum_{k=1}^{f}\braket{\tilde{\psi}_{k}|H_p|\tilde{\psi}_{k}}_{\beta^*_k} \\
    & = \sum_{k=N+1}^{f+N}\braket{\phi_{k}|H_p|\phi_{k}}_{\nu_k} - \sum_{k=1}^{f}\braket{\psi_{k}|H_p|\psi_{k}}_{\beta^*_k} \label{eq:sumofdiff}
\end{aligned}
\end{equation}
where the last equality follows from the fact that the final summation only includes terms associated with the state predictions under the implemented gate parameters, which in the perfect model case considered, correspond to the actual state trajectory under those gate parameters. 
The inclusion of a hypothetical $J_{f+1}^*$ corresponding to an optimal objective function value beyond the final layer in the summation of Eq.~\eqref{eq:sumofdiff} could be avoided by modifying the summation to only consider up to layer $f-1$, with the corresponding modifications to the subsequent steps and bounds on the summations in Eq. \eqref{eq:prop1}, but this will be less useful for Part 2 of the proof, so we retain $J_{f+1}^*$ in this proof.

Assuming, without loss of generality, that $\langle \psi|H_p|\psi \rangle \geq 0$, for all $\ket{\psi}$ in the $2^n$-dimensional complex Hilbert space containing the quantum state, the left hand side of Eq. \eqref{eq:sumofdiff} is bounded below as follows:
\begin{equation}
\begin{aligned}
    \sum_{k=1}^{f}[J_{k+1}^*-J_k^*] & = J_{f+1}^*-J_{1}^* \\
    & \geq -J_{1}^*\label{eq:lowerbound} 
\end{aligned}
\end{equation}
Furthermore, because the gate parameters determined by FALQON would be feasible for the MPC of Eq. \eqref{eq:LMPCAlgorithm}, the objective function under the optimal gate parameters $\beta^*(i|k)$ for $i\in \{k,\ldots,k+N-1\}$ for $k=1$ is upper bounded by the objective function value under the feasible FALQON gate parameters as follows:
\begin{equation}
    J_{1}^* \leq \sum_{k=1}^{N}\braket{\phi_{k}|H_p|\phi_{k}}_{\nu_k}\label{eq:upperbound}
\end{equation}
$\sum_{k=1}^{f}[J_{k+1}^*-J_k^*]$ is then upper bounded by the right-hand side of Eq. \eqref{eq:sumofdiff} and lower bounded by the right-hand side of Eq. \eqref{eq:lowerbound}.  Therefore:
\begin{equation}
\begin{aligned}
    -J_{1}^* \leq \sum_{k=N+1}^{f+N}\braket{\phi_{k}|H_p|\phi_{k}}_{\nu_k} - \sum_{k=1}^{f}\braket{\psi_{k}|H_p|\psi_{k}}_{\beta^*_k} 
\end{aligned}
\end{equation}
which can be rewritten as:
\begin{equation}
\begin{aligned}
    -\sum_{k=N+1}^{f+N}\braket{\phi_{k}|H_p|\phi_{k}}_{\nu_k} + \sum_{k=1}^{f}\braket{\psi_{k}|H_p|\psi_{k}}_{\beta^*_k} \leq J_{1}^* \label{eq:upperbound2}
\end{aligned}
\end{equation}
Combining the right-hand side of Eq. \eqref{eq:upperbound} and left-hand side of Eq. \eqref{eq:upperbound2} gives:
\begin{equation}
\begin{aligned}
    -\sum_{k=N+1}^{f+N}\braket{\phi_{k}|H_p|\phi_{k}}_{\nu_k} & + \sum_{k=1}^{f}\braket{\psi_{k}|H_p|\psi_{k}}_{\beta^*_k}\\
    & \leq \sum_{k=1}^{N}\braket{\phi_{k}|H_p|\phi_{k}}_{\nu_k}
\end{aligned}
\end{equation}
Adding $\sum_{k=N+1}^{f+N}\braket{\phi_{k}|H_p|\phi_{k}}_{\nu_k}$ to both sides, the sum of Eq. \eqref{eq:LMPCAlgorithm:Objective} from layer $1$ to layer $f$ under the MPC-based algorithm of Eq. \eqref{eq:LMPCAlgorithm} is no greater than the sum of Eq. \eqref{eq:LMPCAlgorithm:Objective} from layer $1$ to layer $f+N$ under FALQON, proving the bound of Eq. \eqref{eq:prop1}.

\textit{Part 2.}  We now prove that Eq. \eqref{eq:prop1Option2} holds when a shrinking horizon is used for the last $N$ sampling periods before time step $f$.  In this case, the sum of the difference of the optimal solutions to Eq. \eqref{eq:LMPCAlgorithm:Objective} between consecutive layers for $k \in \{1,\ldots,f-N\}$ before a shrinking horizon is used, given by Eq.~\eqref{eq:sumofdiff}, is as follows:
\begin{equation}
    \begin{aligned}
        & \sum_{k=1}^{f-N} [J_{k+1}^* - J_k^*] \\
        & \leq \sum_{k=N+1}^{f}\braket{\phi_{k}|H_p|\phi_{k}}_{\nu_k} - \sum_{k=1}^{f-N}\braket{\psi_{k}|H_p|\psi_{k}}_{\beta^*_k}\label{eq:DiffBeforeShrinkingHorizon}
    \end{aligned}
\end{equation}
Using similar steps as in Part 1 starting from Eq. \eqref{eq:lowerbound}, we conclude that the following holds (a version of Eq.~\eqref{eq:prop1}):
\begin{equation}
    \begin{aligned}
        \sum_{k=1}^{f-N} \braket{\psi_{k}|H_p|\psi_{k}}_{\beta^*_k} \leq \sum_{k=1}^{f} \braket{\phi_{k}|H_p|\phi_{k}}_{\nu_k}
    \end{aligned}
\end{equation}
We now consider the last $N$ sampling periods, over which a shrinking horizon is employed.  In this case, we denote the optimal solution at layer $k \in \{f-N+1,\ldots,f\}$ as:
\begin{equation}
    \begin{aligned}
        J_k^* = \sum_{i=k}^{k+N_k-1}\braket{\tilde{\psi}_{i}|H_p|\tilde{\psi}_{i}}_{\beta^*(i|k)}\label{eq:optsln_k}
    \end{aligned}
\end{equation}
where $N_k = N - j + 1$ and $j = k - f + N$ for $k \in \{f-N+1,\dots,f\}$. If $\beta^*_k$ is applied in layer $k$, the remaining optimal parameters found at layer $k$, $\beta^*(i|k)$, $i=k+1,\ldots,k+N_k-1$, comprise a feasible solution to the MPC of Eq. \eqref{eq:LMPCAlgorithm} at $k+1$. The optimal value of the objective function of Eq. \eqref{eq:LMPCAlgorithm} at $k+1$ is upper bounded by the value that it would take under this feasible solution as follows: 
\begin{equation}
    \begin{aligned}
        J_{k+1}^* \leq \sum_{i=k+1}^{k+N_k-1}\braket{\tilde{\psi}_{i}|H_p|\tilde{\psi}_{i}}_{\beta^*(i|k)}\label{eq:optsln_kplus1}
    \end{aligned}
\end{equation}
Subtracting Eq. \eqref{eq:optsln_k} from Eq. \eqref{eq:optsln_kplus1}, the difference between the optimal values of the objective function in consecutive layers is bounded as follows:
\begin{equation}
    J_{k+1}^* - J_k^* \leq - \braket{\psi_{k}|H_p|\psi_{k}}_{\beta^*_k} \label{eq:5-100}
\end{equation}
for layers $k \in \{f-N+1,\dots,f-1\}$. Taking the sum of Eq. \eqref{eq:5-100} from $k=f-N+1$ when the shrinking horizon starts to $k=f-1$ and subtracting the optimal solution at $k=f$, i.e., $J_{f}^*$, we obtain:
\begin{equation}
    \begin{aligned}
        \sum_{k=f-N+1}^{f-1}& [J_{k+1}^*  - J_k^*] - J_{f}^*   \\ 
        & \leq \sum_{k=f-N+1}^{f-1} -\langle \psi_{k}|H_p|\psi_{k}\rangle_{\beta_k^*} - J_{f}^* \\
        & = \sum_{k=f-N+1}^{f} -\langle \psi_{k}|H_p|\psi_{k}\rangle_{\beta_k^*}\label{eq:DiffDuringShrinkingHorizon}
    \end{aligned}
\end{equation}
The sum of the differences between the optimal objective function values at layers $1$ through $f-1$, subtracting $J_{f}^*$, is given by:
\begin{equation}
    \begin{aligned}
        & \sum_{k=1}^{f-1} [J_{k+1}^* - J_k^*] - J_{e,f}^* = \sum_{k=1}^{f-N} [J_{k+1}^* - J_k^*] \\
        &\quad \quad + \sum_{k=f-N+1}^{f-1} [J_{k+1}^* - J_k^*] - J_{f}^* \\
        & \leq \sum_{k = N}^{f} \braket{\phi_{k}|H_p|\phi_{k}}_{\nu_k} - \sum_{k = 1}^{f-N} \braket{\psi_{k}|H_p|\psi_{k}}_{\beta^*_k} \\
        & \quad \quad - \sum_{k = f-N+1}^{f} \braket{\psi_{k}|H_p|\psi_{k}}_{\beta^*_k}\label{eq:FullSum}
    \end{aligned}
\end{equation}
where the upper bound results from substituting the right-hand side of Eq. \eqref{eq:DiffBeforeShrinkingHorizon} for $\sum_{k=1}^{f-N} [J_{k+1}^* - J_k^*]$ and the right-hand side of Eq. \eqref{eq:DiffDuringShrinkingHorizon} for $\sum_{k=f-N+1}^{f-1} [J_{k+1}^* - J_k^*] - J_{f}^*$. Furthermore, the sum of the differences between the optimal objective function values in consecutive layers from layer $1$ to layer $f-1$, subtracting the optimal value at layer $f$, can be written as follows:
\begin{equation}
    \begin{aligned}
        & \sum_{k=1}^{f-1} [J_{k+1}^* - J_k^*] - J_{f}^* \\
        & = J_{f}^* - J_{1}^* - J_{f}^* \\
        & = - J_{1}^* \label{eq:SumInequality}
    \end{aligned}
\end{equation}
The objective function under the optimal gate parameters $\beta^*(i|k)$ for $i\in \{k,\ldots,k+N-1\}$ for $k=1$ is upper bounded by the objective function value under the FALQON gate parameters, which is a feasible solution for Eq. \eqref{eq:LMPCAlgorithm:Objective} satisfying Eq. \eqref{eq:LMPCAlgorithm:TerminalConstraint}, as follows:
\begin{equation}
\begin{aligned}
    J_{1}^* \leq \sum_{k=1}^{N}\braket{\phi_{k}|H_p|\phi_{k}}_{\nu_k} \label{eq:Falqonbound}
\end{aligned}
\end{equation}
Taking the negative of both sides of Eq. \eqref{eq:Falqonbound}, combining the result with Eq. \eqref{eq:SumInequality}, and then using the bound in Eq. \eqref{eq:FullSum}, gives:
\begin{equation}
\begin{aligned}
      -\sum_{k=1}^{N}\braket{\phi_{k}|H_p|\phi_{k}}_{\nu_k} & \leq \sum_{k = N}^{f} \braket{\phi_{k}|H_p|\phi_{k}}_{\nu_k} \\
     & - \sum_{k = 1}^{f} \braket{\psi_{k}|H_p|\psi_{k}}_{\beta^*_k}
\end{aligned}
\end{equation}
Adding $\sum_{k=1}^{N}\braket{\phi_{k}|H_p|\phi_{k}}_{\nu_k}$ and $\sum_{k = 1}^{f} \braket{\psi_{k}|H_p|\psi_{k}}_{\beta^*_k}$ to both sides, the performance bound for the shrinking horizon case is
\begin{equation}
    \begin{aligned}
        \sum_{k=1}^{f}\braket{\psi_{k}|H_p|\psi_{k}}_{\beta^*_k} \leq \sum_{k=1}^{f}\braket{\phi_{k}|H_p|\phi_{k}}_{\nu_k}
    \end{aligned}
\end{equation}

\begin{rem}
    We highlight several points regarding Theorem~\ref{thm:terminalconstraintLMPC}.  First, we note that the summation on the right-hand side of Eq. \eqref{eq:prop1} is over a greater number of layers than the summation on the left-hand side. Furthermore, the case without a shrinking horizon used in the proof of Eq. \eqref{eq:prop1} does not necessarily enforce $\ket{\psi_{f}} = \ket{\phi_{f}}$ (only that $\ket{\tilde{\psi}_{f+N-1}} = \ket{\phi_{f+N-1}}$). Therefore, the final solution of the MPC at layer $k=f$ may not be as optimal as the solution of FALQON at the same layer when a fixed-length prediction horizon is used.  Employing a shrinking horizon, the expression in Eq. \eqref{eq:prop1Option2} can be obtained, where the summations on both sides of the inequality involve the same number of layers. Furthermore, with the shrinking horizon, $\ket{\psi_f} = \ket{\phi_f}$.
\end{rem}
\begin{rem}
This proof exploits an objective function structure consisting of a sum of functions of the same form that are lower-bounded and depend only on the quantum state and gate parameter at a given layer $k$.  They do not explicitly require that the form of this lower-bounded function must be $\tilde{\braket{H_p}}_{k}$ at layer $k$.  Therefore, the results of this proof will hold even with replacing $\langle \psi_k | H_p | \psi_k \rangle$ with a general lower-bounded function depending on $\ket{\psi_k}$ and $\beta_k$ at every layer, denoted by $l(\ket{\psi_k},\beta_k)$, meaning that the cumulative cost objective function would have the form $\sum_{i=k}^{k+N-1} l(\ket{\tilde{\psi}_i},\beta_i)$.
\end{rem}

\subsection{Pauli Propagation Example Including Truncations}\label{app:PauliPropExample}

In this section, we demonstrate how Pauli propagation and truncation methods can be used with the MPC-based algorithm through an explicit illustrative example. We consider the Max-Cut problem for the 4-node graph in Panel (a) of Figure~\ref{fig:MaxCutGraphs}, with a problem Hamiltonian of the following form:
\begin{equation*}
    H_p = -\frac{1}{2}(5I - Z_1Z_2 - Z_2Z_3 - Z_3Z_4 - Z_1Z_3 - Z_1Z_4)
\end{equation*}
and the following driver Hamiltonian:
\begin{equation*}
    H_d = X_1 + X_2 + X_3 + X_4
\end{equation*}

We first showcase the form of Eq. \eqref{eq:PauliPropEq1} applied to the computation of $\langle \tilde{H}_p \rangle_{k+N-1}$ for this example.  For simplicity of presentation, we consider $N=1$.  Then, $\tilde{\braket{H_p}}_{k+N-1} = \tilde{\braket{H_p}}_{k}$ is given by:
\begin{equation}
\begin{aligned}
    &\langle \tilde{\psi}_{k}| H_p | \tilde{\psi}_{k}\rangle = \\
    & \langle \psi_{k-1} | U_p^{\dagger} U_d^{\dagger}(\beta_k) H_p U_d(\beta_k)U_p| \psi_{k-1} \rangle \\
    & = -\frac{1}{2}\langle \psi_{k-1} | U_p^{\dagger} U_d^{\dagger}(\beta_k) 5I U_d(\beta_k)U_p| \psi_{k-1} \rangle \\
    & +\frac{1}{2}\langle \psi_{k-1} | U_p^{\dagger} U_d^{\dagger}(\beta_k) Z_1Z_2 U_d(\beta_k)U_p| \psi_{k-1} \rangle \\
    & +\frac{1}{2}\langle \psi_{k-1} | U_p^{\dagger} U_d^{\dagger}(\beta_k) Z_2Z_3 U_d(\beta_k)U_p| \psi_{k-1} \rangle \\
    & +\frac{1}{2}\langle \psi_{k-1} | U_p^{\dagger} U_d^{\dagger}(\beta_k) Z_3Z_4 U_d(\beta_k)U_p| \psi_{k-1} \rangle \\
    & +\frac{1}{2}\langle \psi_{k-1} | U_p^{\dagger} U_d^{\dagger}(\beta_k) Z_1Z_3 U_d(\beta_k)U_p| \psi_{k-1} \rangle \\
    & +\frac{1}{2}\langle \psi_{k-1} | U_p^{\dagger} U_d^{\dagger}(\beta_k) Z_1Z_4 U_d(\beta_k)U_p| \psi_{k-1} \rangle
    \end{aligned} \label{eq:EndCircuit}
\end{equation}
This is equivalent to a sum of expectation values of Pauli operators evaluated at the state predicted at the end of the circuit.  By applying Eq. \eqref{eq:PauliPropagation} to each term in Eq.~\eqref{eq:EndCircuit}, we can write each term as a linear combination of Pauli strings following Eq. \eqref{eq:Pauliexp}.  We will demonstrate the application of Eq. \eqref{eq:PauliPropagation} for one of these terms, the one involving $Z_2Z_3$, as an example.  Similar steps could then be performed for the others.

For each term, application of Eq. \eqref{eq:PauliPropagation} requires the explicit definition of the operators $U_p$ and $U_d$ for this example, which are given as follows:
\begin{equation}
    \begin{aligned}
        U_p &= e^{-iH_p\Delta_t} \\
        &= e^{-0.5iZ_1Z_2\Delta_t}e^{-0.5iZ_2Z_3\Delta_t}e^{-0.5iZ_3Z_4\Delta_t}\\
        &\quad \quad \quad e^{-0.5iZ_1Z_3\Delta_t}e^{-0.5iZ_1Z_4\Delta_t} \\
        & = U_{Z_1Z_2}(0.5\Delta_t)U_{Z_2Z_3}(0.5\Delta_t)U_{Z_3Z_4}(0.5\Delta_t) \\
        &\quad \quad \quad U_{Z_1Z_3}(0.5\Delta_t)U_{Z_1Z_4}(0.5\Delta_t) \label{eq:UpInExplicitExample}
    \end{aligned}
\end{equation}
\begin{equation}
    \begin{aligned}
        U_d &= e^{-iH_d \beta_k \Delta_t} \\
        & = e^{-iX_1 \beta_k \Delta_t}e^{-iX_2 \beta_k \Delta_t}e^{-iX_3 \beta_k \Delta_t}e^{-iX_4 \beta_k \Delta_t} \\
        &= U_{X_1}(\beta_k\Delta_t)U_{X_2}(\beta_k\Delta_t)U_{X_3}(\beta_k\Delta_t)U_{X_4}(\beta_k\Delta_t) \label{eq:UdInExplicitExample}
    \end{aligned}
\end{equation}

To compute $\langle \psi_{k-1} | U_p^{\dagger} U_d^{\dagger}(\beta_k) Z_1Z_3 U_d(\beta_k)U_p| \psi_{k-1} \rangle$ in Eq.~\eqref{eq:EndCircuit} following Eq. \eqref{eq:PauliPropagation}, we first conjugate $Z_2Z_3$ by the first operator in $U_d$ as follows:
\begin{equation}
    U^\dagger_{X_1}(\beta_k\Delta_t)Z_2Z_3 U_{X_1}(\beta_k\Delta_t) = Z_2Z_3\label{eq:PP1}
\end{equation}
since $[X_1,Z_2Z_3] = 0$. Conjugating $U^\dagger_{X_1}(\beta_k\Delta_t)Z_2Z_3 U_{X_1}(\beta_k\Delta_t)$ in Eq. \eqref{eq:PP1} by the next operator in $U_d$ in Eq. \eqref{eq:UdInExplicitExample}, which is $U_{X_2}(\beta_k\Delta_t)$, gives:
\begin{equation}
    \begin{aligned}
    U^\dagger_{X_2}&(\beta_k\Delta_t)Z_2Z_3 U_{X_2}(\beta_k\Delta_t) \\
    & = \cos(2\beta_k\Delta_t)Z_2Z_3 - i\sin(2\beta_k\Delta_t)Y_2Z_3 \label{eq:PP2}
    \end{aligned}
\end{equation}
demonstrating an example of observable \textit{branching}. Next, the expression in Eq. \eqref{eq:PP2} is conjugated by the next operator in $U_d$, which is $U_{X_3}$.  This produces a summation of the following two expressions:
\begin{equation}
    \begin{aligned}
    (\cos(2\beta_k\Delta_t))&U^\dagger_{X_3}(\beta_k\Delta_t)Z_2Z_3 U_{X_3}(\beta_k\Delta_t) \\
    & = \cos^2(2\beta_k\Delta_t)Z_2Z_3 \\
    &\quad- \cos(2\beta_k\Delta_t)\sin(2\beta_k\Delta_t)Z_2Y_3\label{eq:PP3}
    \end{aligned}
\end{equation}
\begin{equation}
    \begin{aligned}
    -(i\sin(2\beta_k\Delta_t))&U^\dagger_{X_3}(\beta_k\Delta_t)Y_2Z_3 U_{X_3}(\beta_k\Delta_t) \\
    &= -i\sin(2\beta_k\Delta_t)\cos(2\beta_k\Delta_t)Y_2Z_3 \\
    &\quad- \sin^2(2\beta_k\Delta_t)Y_2Y_3\label{eq:PP4}
    \end{aligned}
\end{equation}
These expressions demonstrate further operator branching.  Since $X_4$ commutes with all four of the Pauli operators from Eqs. \eqref{eq:PP3} and \eqref{eq:PP4}, i.e., $Z_2Z_3, Z_2Y_3, Y_2Z_3$, and $Y_2Y_3$, the summation of Eqs. \eqref{eq:PP3} and \eqref{eq:PP4} will be unaffected by conjugation by the final term in the expression for $U_d$ in Eq. \eqref{eq:UdInExplicitExample}, which is $U_{X_4}(\beta_k\Delta_t)$.  Thus, to develop the full summation representing $\langle \psi_{k-1} | U_p^{\dagger} U_d^{\dagger}(\beta_k) Z_1Z_3 U_d(\beta_k)U_p| \psi_{k-1} \rangle$ in the form of Eq. \eqref{eq:Pauliexp} by following Eq. \eqref{eq:PauliPropagation} would require next conjugating the sum of Eqs. \eqref{eq:PP3}-\eqref{eq:PP4} by each of the operators comprising $U_p$ in Eq. \eqref{eq:UpInExplicitExample}.  For the sake of brevity, we will not continue to demonstrate this for each of the four operators, but only with one term in Eq. \eqref{eq:PP4} for the first, which is $U_{Z_1Z_2}(0.5 \Delta_t)$, to showcase how this \textit{backpropagation} procedure can generate high-weight Pauli operators.  Conjugating $-i\sin(2\beta_k\Delta_t)\cos(2\beta_k\Delta_t)Y_2Z_3$ by $U_{Z_1Z_2}(0.5 \Delta_t)$ gives:  
\begin{equation}
    \begin{aligned}
    -i\sin(2\beta_k\Delta_t) &\cos(2\beta_k\Delta_t) U^\dagger_{Z_1Z_2}(0.5\Delta_t)Y_2Z_3 U_{Z_1Z_2}(0.5\Delta_t) \\
    & = -i\sin(2\beta_k\Delta_t) \cos(2\beta_k\Delta_t)\cos(\Delta_t)Y_2Z_3 \\
    & + \sin(2\beta_k\Delta_t) \cos(2\beta_k\Delta_t)\sin(\Delta_t)Z_1X_2Z_3\label{eq:PP5}
    \end{aligned}
\end{equation}
where the weight-3 observable $Z_1X_2Z_3$ is generated.  Thus, even with $N=1$, the process of writing the expression for $\langle \tilde{H}_p \rangle_k$ using Eq. \eqref{eq:PauliPropagation} may result in an expression for $\langle \tilde{H}_p \rangle_k$ containing Pauli operators of higher weight than those in $H_p$.  Particularly as $N$ is increased, increasing the number of times that operator conjugation is performed in the backpropagation procedure, observables with weight up to $n$ may be generated, where $n$ is the number of qubits.  For example, if $N$ had been greater than 1 in this example, since a term involving $Z_1X_2Z_3$ will persist through the remaining conjugations involving operators from $U_p$, $Z_1X_2Z_3$ would be conjugated by $U_{X_3}(\beta\Delta_t)$ in layer 2, causing the observable $Z_1X_2Y_3$ to be generated, which after evolution through $U_{Z_3Z_4}(0.5\Delta_t)$ would cause the weight-4 observable $Z_1X_2X_3Z_4$ to be generated. This effect can cause up to $4^n-1$ observables to be generated in the worst case if the circuit is sufficiently long.  This offers opportunity for truncation strategies to reduce the number of terms in the expression for $\langle \tilde{H}_p \rangle_{k+N-1}$.

The structure of the backpropagated form of Eq.~\eqref{eq:EndCircuit} is dependent on $N$, but the gate parameters affect the coefficients, as can be seen in Eqs.~\eqref{eq:PP1}-\eqref{eq:PP5} above.  For this reason, using weight-based truncations results in a fixed structure for $\tilde{\braket{H_p}}_{k+N-1}$ for a given $N$, regardless of the values of the gate parameters, whereas with coefficient-based truncations, changing the gate parameters can affect which terms in the summation obtained from backpropagation are trimmed. Truncations are applied after each application of Eq. \eqref{eq:PauliPropagation}, e.g., a weight-based truncation where Pauli operators greater than weight-2 are trimmed would be applied to Eq. \eqref{eq:PP5} to trim the $Z_1X_2Z_3$ term, resulting in $-i\sin(2\beta_k\Delta_t) \cos(2\beta_k\Delta_t) U^\dagger_{Z_1Z_2}(0.5\Delta_t)Y_2Z_3 U_{Z_1Z_2}(0.5\Delta_t)$ $= -i\sin(2\beta_k\Delta_t) \cos(2\beta_k\Delta_t)\cos(\Delta_t)Y_2Z_3$. With respect to coefficient-based truncations, if $\Delta_t = 0.001$ and $\beta_k = -0.1$, for example, in Eq. \eqref{eq:PP5}, $Y_2Z_3$ would have a coefficient of approximately $-2\times 10^{-4}i$, and $Z_1X_2Z_3$ would have a coefficient of approximately $-2\times 10^{-7}$. If a threshold for coefficient-based truncation is chosen such that the coefficient of $Z_1X_2Z_3$, which has the smaller magnitude, is below the threshold but that for $Y_2Z_3$ is above it, then $Z_1X_2Z_3$ would be trimmed, again resulting in $-i\sin(2\beta_k\Delta_t) \cos(2\beta_k\Delta_t) U^\dagger_{Z_1Z_2}(0.5\Delta_t)Y_2Z_3 U_{Z_1Z_2}(0.5\Delta_t)$ $= -i\sin(2\beta_k\Delta_t) \cos(2\beta_k\Delta_t)\cos(\Delta_t)Y_2Z_3$. 

\begin{figure*}
    \centering
    \begin{subfigure}{0.3\textwidth}
        \centering
        \begin{overpic}[width=0.9\textwidth]{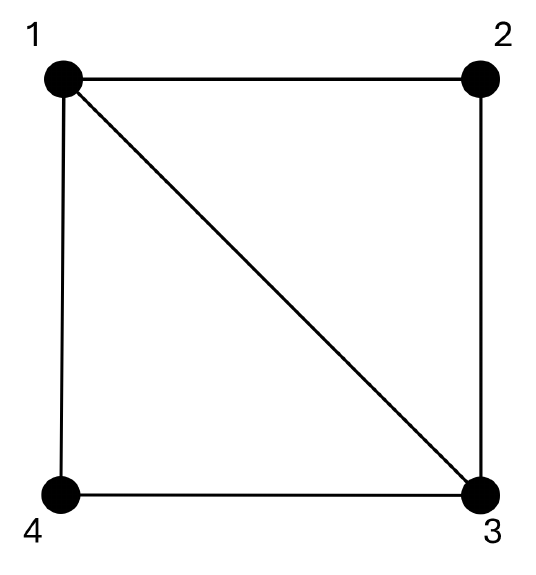} 
            \put(-7,90){\textbf{(a)}}
        \end{overpic}
    \end{subfigure}
    \begin{subfigure}{0.3\textwidth}
        \centering
        \begin{overpic}[width=1.0\textwidth]{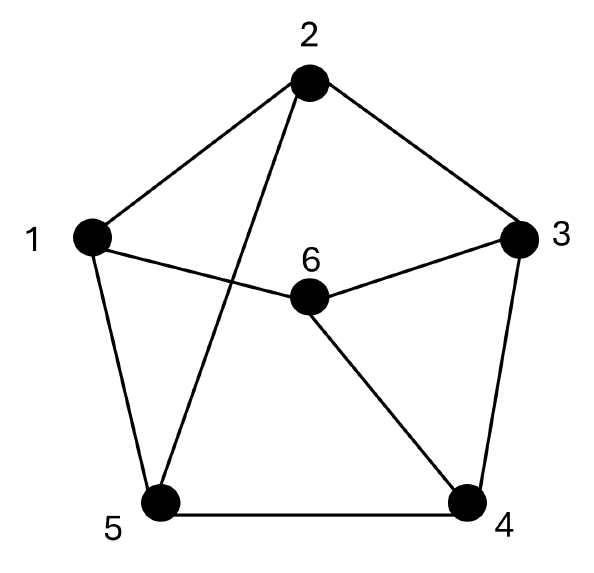} 
            \put(5,87){\textbf{(b)}}
        \end{overpic}
    \end{subfigure}
    \begin{subfigure}{0.3\textwidth}
        \centering
        \begin{overpic}[width=1.0\textwidth]{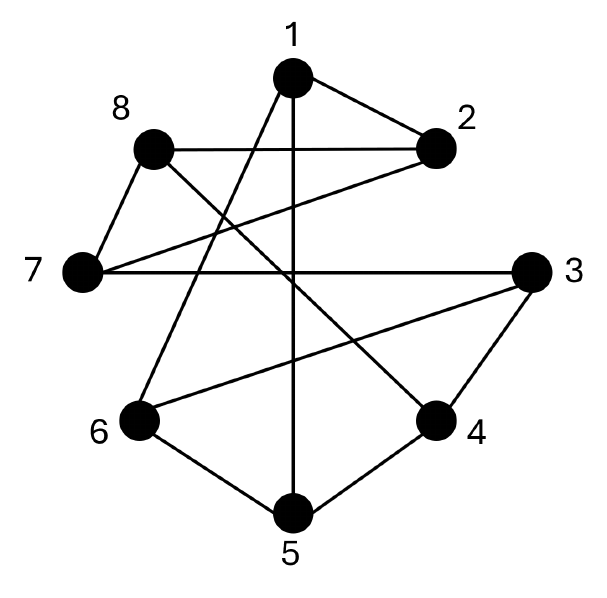} 
            \put(5,87){\textbf{(c)}}
        \end{overpic}
    \end{subfigure}
    \caption{\textbf{Graphs for Max-Cut problems.} Each of the graphs for the Max-Cut problems used in the simulations in this work are shown with vertex indices labeled. Panel (a) is the 4-node graph used in Figures~\ref{fig:Weighted4NodeMaxcut} and~\ref{fig:4NodeVaryingHorizons}, Panel (b) is the 6-node graph used in Figure~\ref{fig:Heatmaps}, and Panel (c) is the 8-node graph used in Figures~\ref{fig:8NodeSims} and~\ref{fig:FluctuationAnalysis}.}
    \label{fig:MaxCutGraphs}
\end{figure*}

\subsection{Max-Cut Simulations} \label{sec:MaxCutWeights}
Figure~\ref{fig:MaxCutGraphs} presents the graphs used in the Max-Cut simulations in this work. In these examples, each weight was generated randomly from a uniform distribution between 0 and 2. The weights used in each run from Figure~\ref{fig:Weighted4NodeMaxcut} are shown in Table~\ref{tab:4NodeWeights}. The problem Hamiltonian for the weighted 6-node Max-Cut problem used in Figure~\ref{fig:Heatmaps} is defined as follows:
 \begin{equation}
 \begin{aligned}
     H_p & = 1.30188113Z_1Z_2 + 1.45309629Z_1Z_5 \\
     & + 1.33858841Z_1Z_6 + 1.88765732Z_2Z_3 \\
     & + 1.51154919Z_2Z_5 + 0.58080955Z_3Z_4 \\
     & + 1.16651376Z_3Z_6 + 0.32684262Z_4Z_5 \\
     & + 0.0093935Z_4Z_6
\end{aligned}
\end{equation}

\begin{table}[]
     \centering
     \resizebox{\columnwidth}{!}
     {\begin{tabular}{c|c|c|c|c|c}
          Run & $Z_1Z_2$ & $Z_2Z_3$ & $Z_3Z_4$ & $Z_1Z_4$ & $Z_1Z_3$ \\
          1 & 1.798525 & 0.920494 & 1.142551 & 0.832072 & 1.981212 \\
          2 & 1.598644 & 0.112369 & 1.152085 & 0.035653 & 1.832361 \\
          3 & 1.037419 & 0.783750 & 0.722643 & 1.151326 & 1.093111 \\
          4 & 1.943770 & 1.581998 & 1.379596 & 0.804860 & 1.355905 \\
          5 & 1.803008 & 1.397768 & 0.658868 & 0.916864 & 0.690076 \\
          6 & 1.364343 & 0.374132 & 1.067018 & 1.611811 & 0.044410 \\
          7 & 0.141521 & 1.254389 & 0.491707 & 1.646971 & 1.080704 \\
          8 & 1.548532 & 0.295641 & 1.060026 & 1.779830 & 0.592130 \\
          9 & 0.969662 & 0.272750 & 0.106999 & 1.185751 & 0.024157 \\
          10 & 1.930573 & 1.074242 & 0.113385 & 0.083101 & 0.016078
     \end{tabular}}
     \caption{\textbf{4-node Max-Cut weights:} The weights for each of the ten instances from Figure~\ref{fig:Weighted4NodeMaxcut}, generated randomly from a uniform distribution between 0 and 2, where the first six digits after the decimal are noted.  The weight is presented in the column corresponding to the Pauli operator with which it is associated.}
     \label{tab:4NodeWeights}
\end{table}

\begin{figure}
    \centering
    \includegraphics[width=1.08\columnwidth]{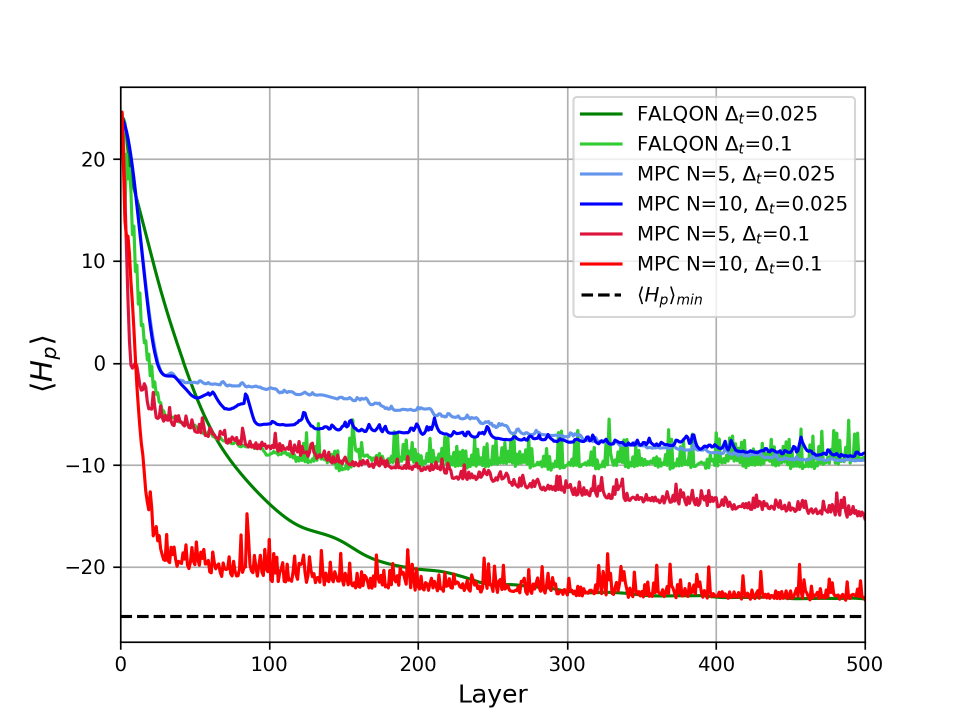}
    \caption{\textbf{Transverse field Ising model simulations.} We simulate the MPC-based quantum algorithm with various values of $\Delta_t$ and $N$ and FALQON with $\Delta_t = 0.025$ and $\Delta_t = 0.1$ on a 4x2 transverse field Ising model. The FALQON trajectories are shown in green, while the $\langle H_p \rangle$ trajectories under the MPC-based algorithm with $\Delta_t = 0.025$ are shown in blue, and the $\langle H_p \rangle$ trajectories under the MPC-based algorithm with $\Delta_t = 0.1$ are shown in red.  In the trajectory of $\langle H_p \rangle$ using the MPC-based algorithm with $\Delta_t = 0.1$ and a prediction horizon of $N=10$, $\braket{H_p}$ decreases quickly over the first 50 layers, then continues to slowly decrease while fluctuating, where at some layers, $\braket{H_p}$ reaches lower values than the smallest value that FALQON with $\Delta_t = 0.025$ achieves over 500 layers.}
    \label{fig:TFIMexact}
\end{figure}

\subsection{Analysis of Fluctuating Trajectories}\label{sec:FluctuationAnalysis}
In this section, we provide additional analyses regarding the fluctuations in $\braket{H_p}$ observed in many of the figures demonstrating the use of the MPC-based algorithm, such as between layers 100 and 150 for $N=2$ in Figure~\ref{fig:4NodeVaryingHorizons}.  The existence of fluctuations in $\braket{H_p}$ between layers is a feature of the terminal cost MPC that persists across various $N$ and $\Delta_t$ values, as well as various truncations when Pauli propagation is used. These fluctuations are demonstrated for a range of $N$ and $\Delta_t$ for the TFIM shown in Figure~\ref{fig:TFIMexact} where full state vector simulation is used in Eq.~\eqref{eq:MPCAlgorithm:Model}, and furthermore shown in Figure \ref{fig:TFIMPauliProp} when Pauli propagation is used for the same problem. 

The terminal cost MPC trajectories in Figure~\ref{fig:TFIMexact} can be contrasted with the performance of FALQON, which prescribes a smooth trajectory with $\Delta_t = 0.025$, but a non-monotonically decreasing and fluctuating trajectory when using $\Delta_t = 0.1$, indicating that $\Delta_t$ is too large. The fluctuations seen in the four instances of the MPC-based algorithm are different in character than those observed with FALQON using $\Delta_t = 0.1$, however, as the MPC-based algorithm trajectories show an overall decrease in $\langle H_p \rangle$ across the 500 layers simulated, suggesting that the fluctuations contribute to this decrease. In the following sections we provide evidence that the fluctuations are a desirable feature of the terminal cost MPC.

\paragraph{Analysis on a Max-Cut Problem}
Panel (a) in Figure~\ref{fig:FluctuationAnalysis} provides an analysis of the fluctuations when the terminal cost MPC is applied to the Max-Cut problem over the 8-node graph in Panel (c) of Figure~\ref{fig:MaxCutGraphs}. To investigate the role of the prediction horizon in the behavior of the fluctuations, we simulate an instance of the terminal cost MPC as a baseline, and initialize FALQON and four additional instances of the terminal cost MPC from a state of the baseline MPC prior to a period of fluctuations, $\ket{\psi_{225}}$. At layer 227, the trajectory of the baseline MPC starts to increase before a period of fluctuations lasting until layer 281, after which the trajectory of $\langle H_p \rangle$ settles into a state where it is lower than before the fluctuations began. The terminal cost MPC with $N=1$ and FALQON produce trajectories of $\langle H_p \rangle$ that overlay one another, where after an initial small decrease in $\langle H_p \rangle$ over the first two layers, both trajectories remain flat. In contrast, the terminal cost MPCs with $N=2,3,5,$ and $10$ all increase the value of $\langle H_p \rangle$ to then subsequently decrease it below what is achieved with the MPC-based algorithm with $N=1$ and with FALQON. As $N$ increases, the fluctuations grow larger, their frequency decreases, and $\langle H_p \rangle$ reaches a lower value with increasing $N$ between layers 280 and 290.  These results provide evidence that the fluctuations are being used productively by the MPC-based algorithm.

\paragraph{Analysis on a Transverse Field Ising Model}
Here we consider again the FALQON simulation for the 4x2 TFIM when $\Delta_t = 0.025$, this time over 1000 layers shown in Panel (b) of Figure~\ref{fig:FluctuationAnalysis}, noting that $\braket{H_p}$ only decreases on the order of $10^{-2}$ from layer 500 to layer 1000. To investigate whether the terminal cost MPC would improve on FALQON's solution, and whether fluctuations would be involved if so, we initialize the terminal cost MPC at layer 950 and simulate for 50 layers. The terminal cost MPC first increases the value of $\langle H_p \rangle$ compared to the value of $\langle H_p \rangle$ at layer 950, but subsequently decreases it, though with continued fluctuations in $\langle H_p \rangle$. The gradual lowering of the minimum value of $\langle H_p \rangle$ observed over layers 950 to 1000 under the terminal cost MPC suggests that the fluctuations are being used productively, especially since it used a fluctuation to kick itself off of the FALQON solution toward improved values of $\langle H_p \rangle$ by the end of the circuit. This example also provides evidence that the MPC-based algorithm can be used in conjunction with FALQON in cases where the FALQON trajectory converges to a suboptimal solution, to increase solution quality.

\begin{figure*}[t]
    \centering
    \begin{subfigure}{0.49\textwidth}
        \centering
        \begin{overpic}[width=1.05\textwidth]{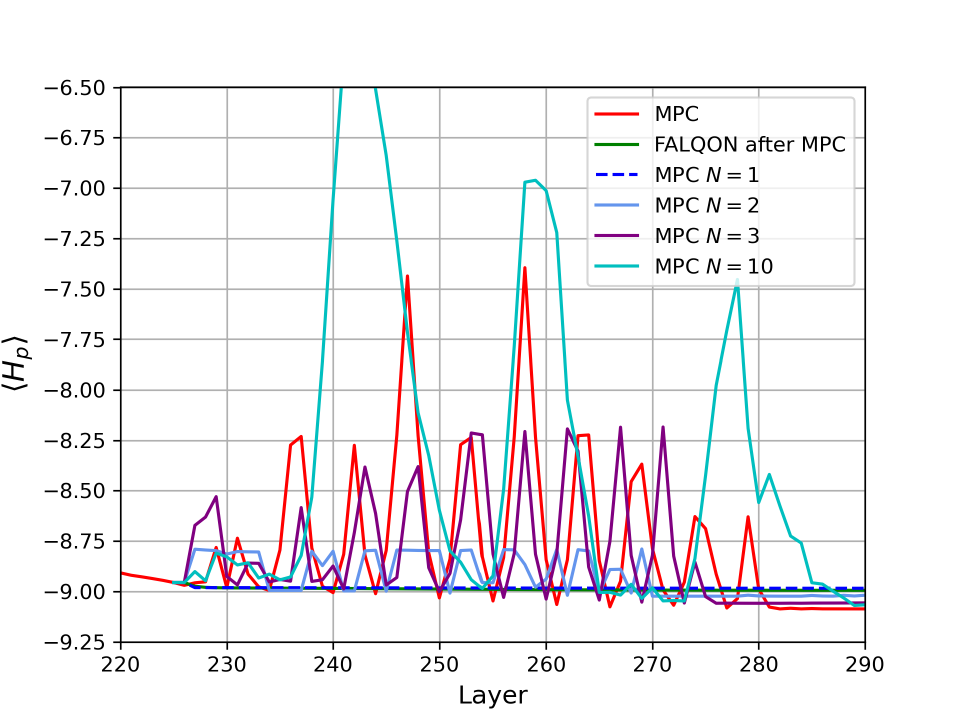}
            \put(0,70){\textbf{(a)}}
        \end{overpic}
    \end{subfigure}
    \begin{subfigure}{0.49\textwidth}
        \begin{overpic}[width=1.05\textwidth]{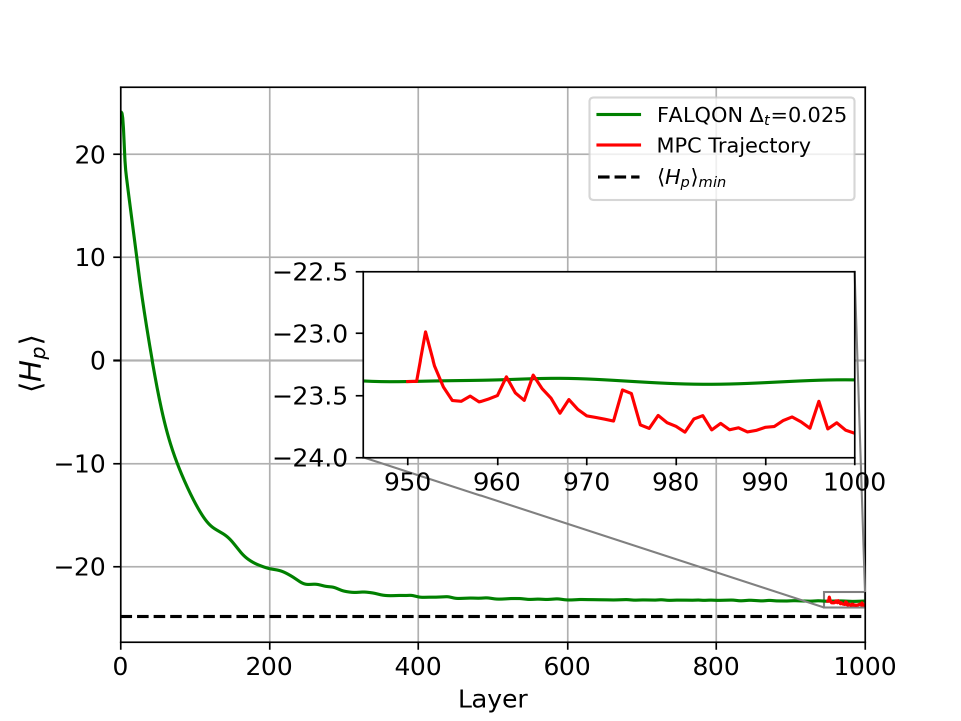}
            \put(0,70){\textbf{(b)}}
        \end{overpic}
    \end{subfigure}
    \caption{\textbf{Analysis of fluctuations.} In Panel (a), we simulate the MPC-based algorithm on the Max-Cut problem over an 8-node graph for 500 layers with $N=5$ and $\Delta_t = 0.02$, zooming in on a region where the trajectory fluctuates. We initialize FALQON and multiple instances of the MPC algorithm with varying prediction horizons, $N = 1,2,3,$ and 10, each with the step size $\Delta_t = 0.02$, with $\ket{\psi_{225}}$, and simulate each for 65 layers. In each instance of the terminal cost MPC, the initial parameter, $\beta^*_{226}$, is found via optimization. The step size $\Delta_t = 0.02$ was selected due to being a suitable step size when used in FALQON. Panel (b) displays a simulation of FALQON on the 4x2 TFIM over 1000 layers where $\Delta_t = 0.025$. The terminal constraint MPC is initialized at $\ket{\psi_{950}}$ from FALQON, where $N=5$, $\Delta_t = 0.025$, and $\beta^*_{951} = -A_{950}$.}
    \label{fig:FluctuationAnalysis}
\end{figure*}

\paragraph{Analysis of Predicted Trajectories}
While these studies illustrate that the fluctuations are indeed productive and calculated, the manner by which the terminal cost MPC determines how to produce productive fluctuations remains to be explained.  To this end, Figure~\ref{fig:OpenLoopTrajectories} shows a series of predictions that the terminal cost MPC is making as it produces the fluctuating trajectories shown in Panel (b) of Figure~\ref{fig:FluctuationAnalysis}. The actual trajectory of $\langle H_p \rangle$ under the terminal cost MPC is shown in red, and the predictions under the optimal gate parameters computed at each layer are shown as dashed-blue lines. Initially, $\beta^*_{951} = -A_{950}$. The terminal cost MPC then receives measurements of $\ket{\psi_{951}}$ and solves for $\beta^*_{952}\cdots \beta^*_{956}$, where the predicted trajectory of $\braket{H_p}$ under these values is shown in Panel (a) in Figure~\ref{fig:OpenLoopTrajectories}.

The terminal cost MPC predicts that by initially increasing the value of $\langle H_p \rangle$, a value of $\tilde{\braket{H_p}}_{956}$ that is more optimal than $\braket{H_p}_{951}$ can be found. According to the receding horizon policy of the MPC, $\beta_{952}^*$ is applied to increase $\langle H_p \rangle$, and the optimization problem is re-solved at the next layer. In Panel (b), the MPC initialized by $\ket{\psi_{952}}$, in seeking to minimize $\tilde{\braket{H_p}}_{957}$, sees that by selecting a different set of parameters than it had planned at layer 952 for layers 953 through 956, it will be able to obtain a value of $\langle H_p \rangle$ at layer 957 that is (locally) minimized.  This process of moving the horizon that initializes the optimization problem from a different state and affects the layer where $\langle H_p \rangle$ is minimized continues over the remainder of the plots shown in Figure~\ref{fig:OpenLoopTrajectories}. As the layers progress, this eventually results in decreases in the minimum value of $\langle H_p \rangle$ achieved within the circuit.  Since the MPC can only predict over a few layers at a time, the gate parameters that it plans often end up being significantly revised at future layers, such that the trajectory of $\langle H_p \rangle$ ends up in many cases to be quite different from the predicted trajectory from any one layer. Based on the analyses of this section, we see that the fluctuations observed in the trajectories of $\langle H_p \rangle$ in many of the simulations of the terminal cost MPC are caused by a combination of the terminal cost objective function, the receding horizon, and the value of $N$.

\begin{figure*}
    \centering
    \begin{subfigure}{0.32\textwidth}
        \centering
        \begin{overpic}[width=\textwidth]{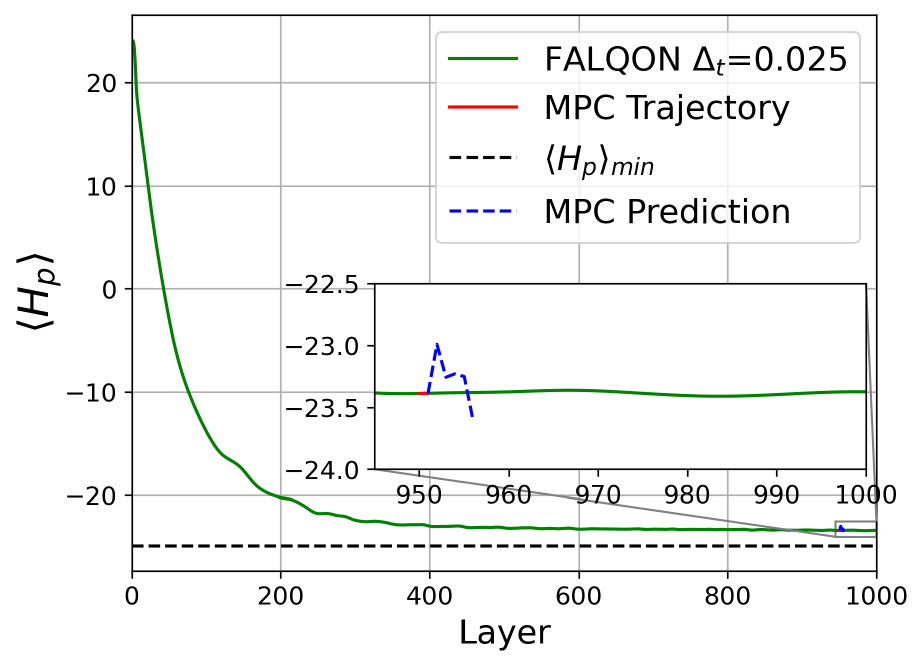}
            \put(0,65){\textbf{(a)}}
        \end{overpic}
    \end{subfigure}
    \begin{subfigure}{0.32\textwidth}
        \centering
        \begin{overpic}[width=\textwidth]{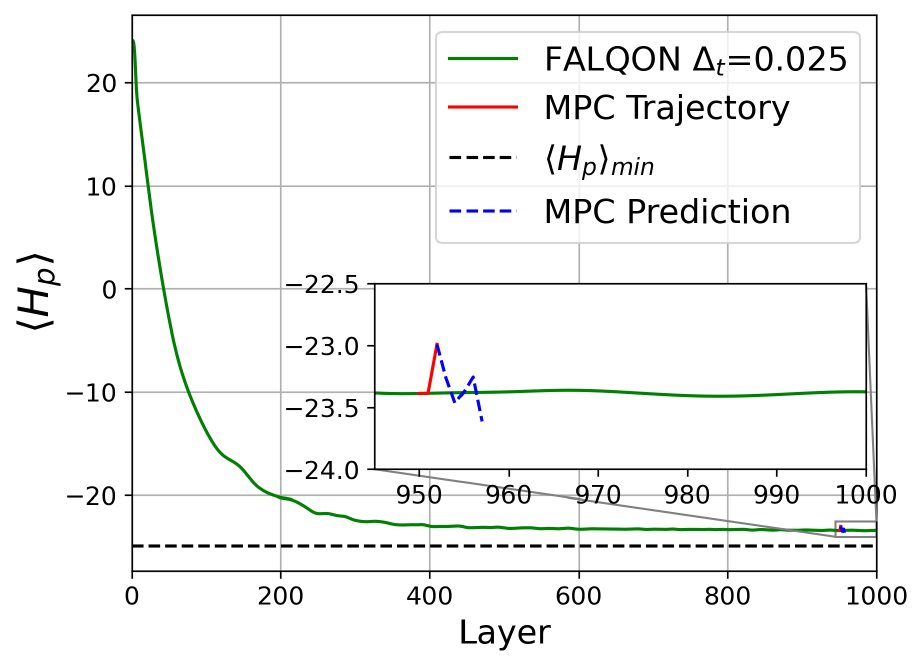}
            \put(0,65){\textbf{(b)}}
        \end{overpic}
    \end{subfigure}
    \begin{subfigure}{0.32\textwidth}
        \centering
        \begin{overpic}[width=\textwidth]{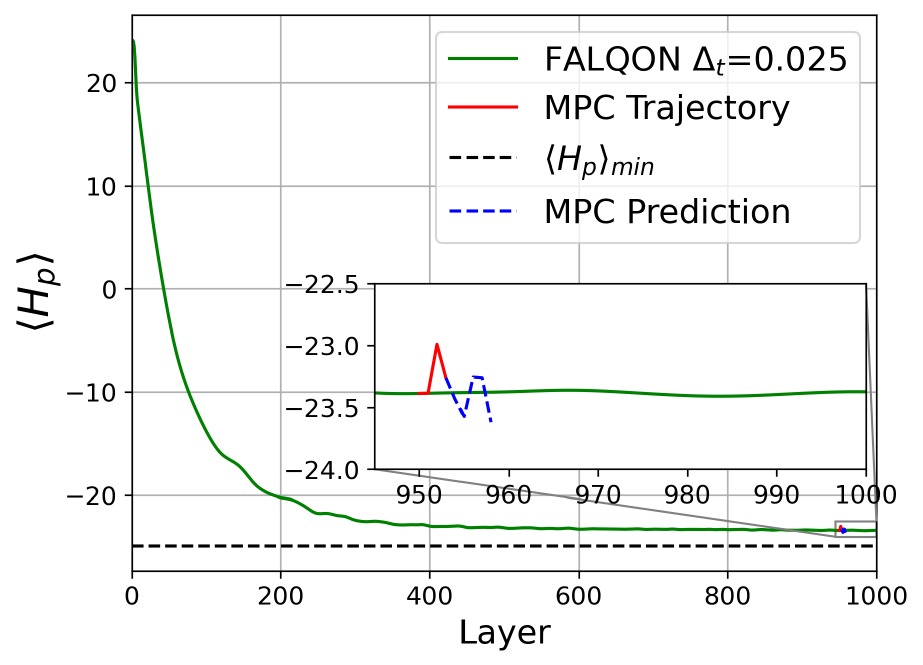}
            \put(0,65){\textbf{(c)}}
        \end{overpic}
    \end{subfigure}

    \bigskip
    \begin{subfigure}{0.32\textwidth}
        \centering
        \begin{overpic}[width=\textwidth]{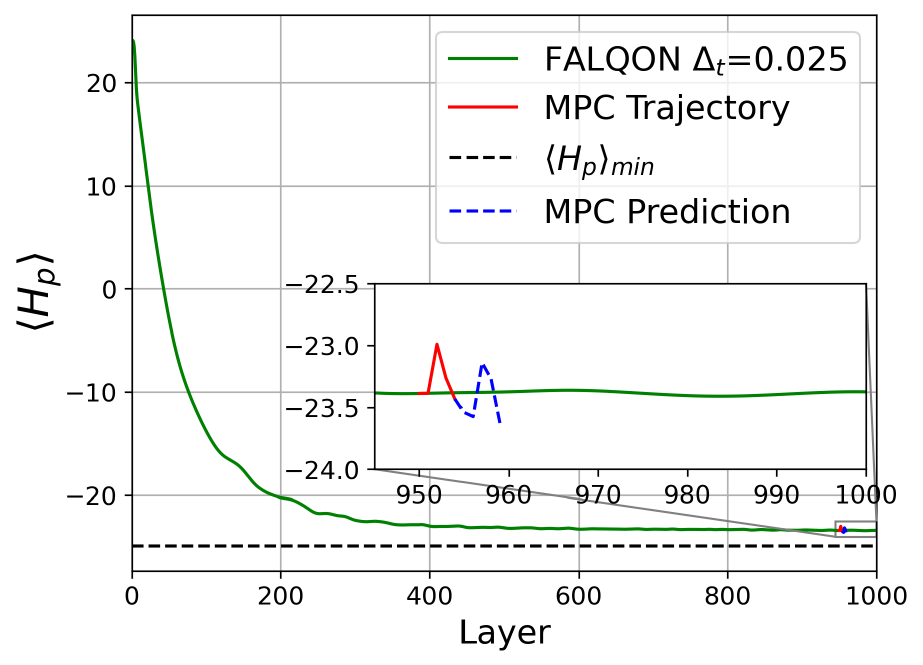}
            \put(0,65){\textbf{(d)}}
        \end{overpic}
    \end{subfigure}
    \begin{subfigure}{0.32\textwidth}
        \centering
        \begin{overpic}[width=\textwidth]{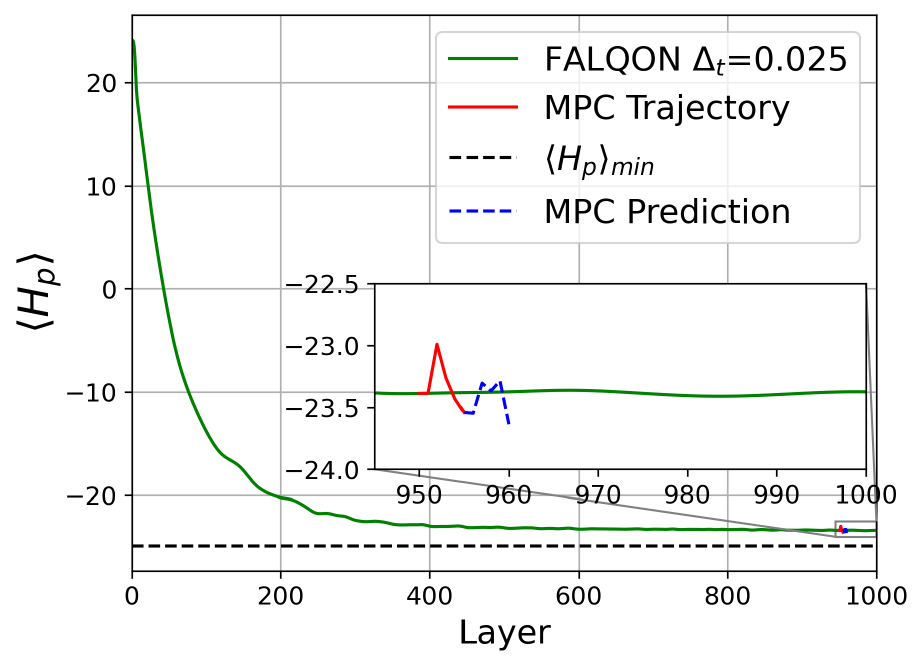}
            \put(0,65){\textbf{(e)}}
        \end{overpic}
    \end{subfigure}
    \begin{subfigure}{0.32\textwidth}
        \centering
        \begin{overpic}[width=\textwidth]{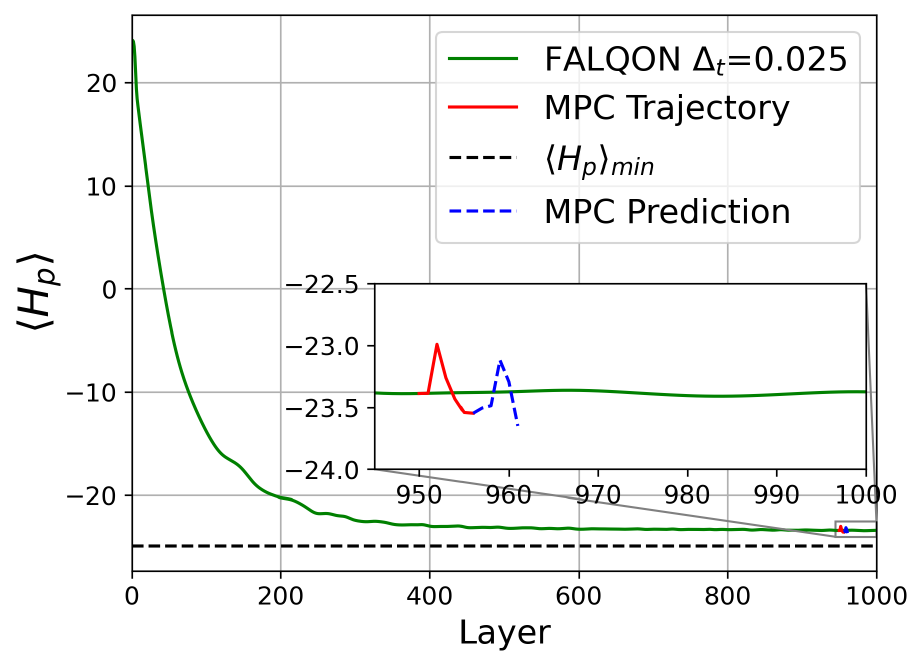}
            \put(0,65){\textbf{(f)}}
        \end{overpic}
    \end{subfigure}

    \bigskip
    \begin{subfigure}{0.32\textwidth}
        \centering
        \begin{overpic}[width=\textwidth]{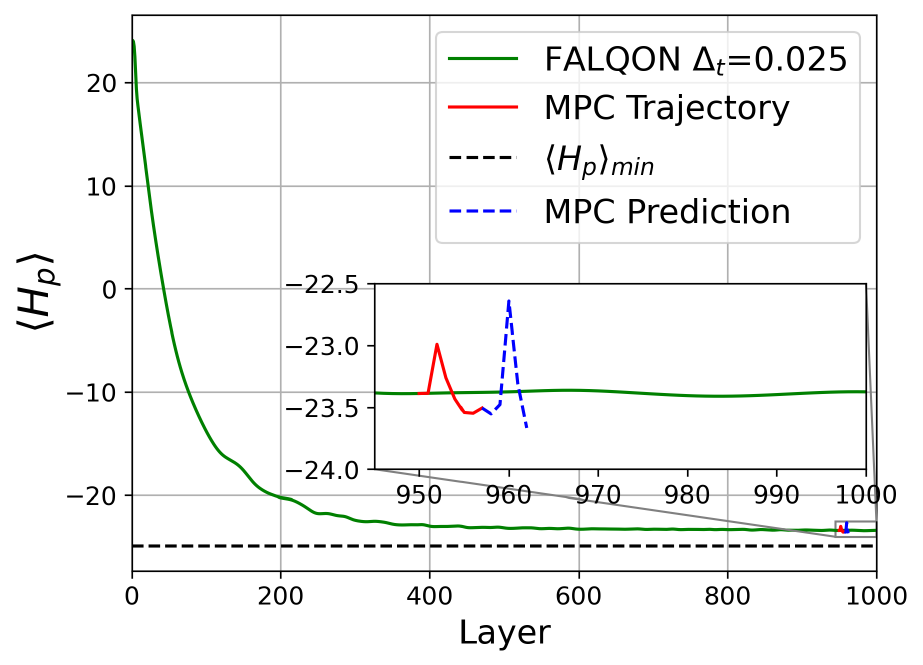}
            \put(0,65){\textbf{(g)}}
        \end{overpic}
    \end{subfigure}
    \begin{subfigure}{0.32\textwidth}
        \centering
        \begin{overpic}[width=\textwidth]{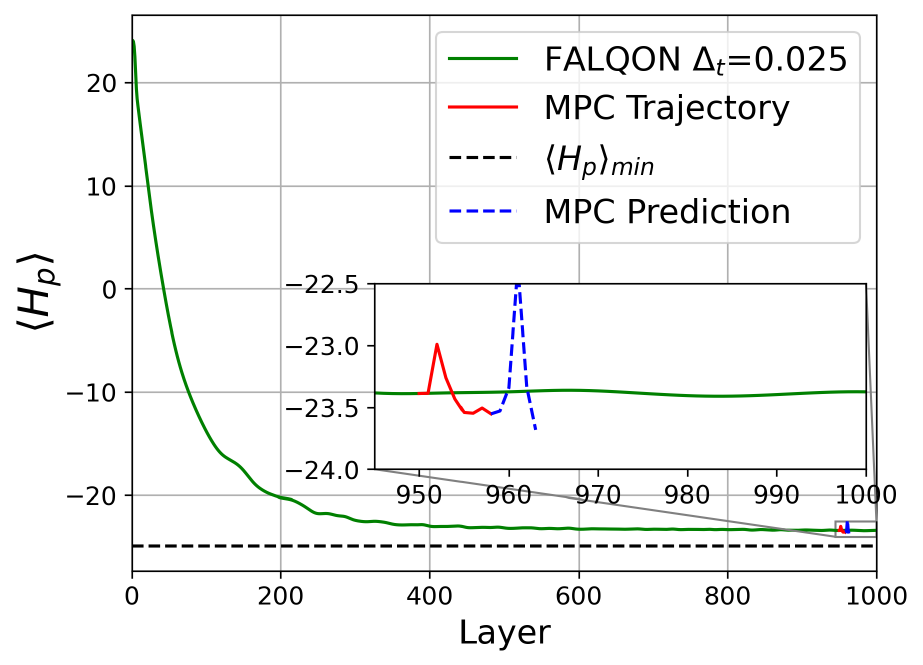}
            \put(0,65){\textbf{(h)}}
        \end{overpic}
    \end{subfigure}
    \begin{subfigure}{0.32\textwidth}
        \centering
        \begin{overpic}[width=\textwidth]{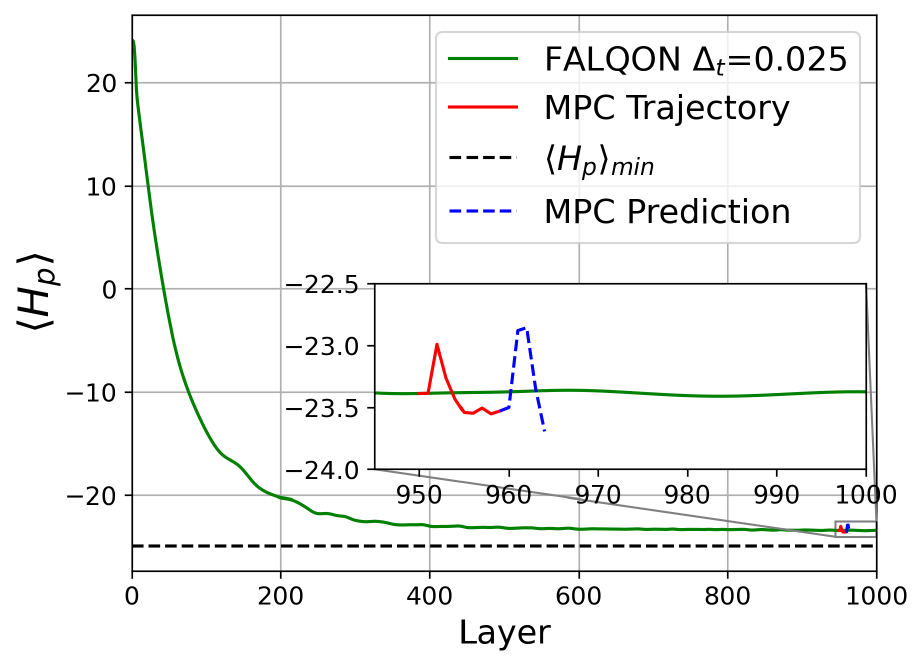}
            \put(0,65){\textbf{(i)}}
        \end{overpic}
    \end{subfigure}
    \caption{\textbf{Open-loop MPC trajectories.} Panels (a)-(i) display predicted trajectories determined at layers 952 through 960 of the terminal cost MPC shown in Panel (a) of Figure~\ref{fig:FluctuationAnalysis}, demonstrating the process behind the trajectory. In each of the figures, the actual trajectory of $\langle H_p \rangle$ is shown in red, whereas the predicted trajectory over the next five layers under the optimal gate parameters computed at the given layer is plotted as a dashed blue line. The minimum value of $\langle H_p \rangle$ for the 4x2 transverse field Ising model problem is plotted as a dashed black line and the trajectory under FALQON is provided for comparison.}
    \label{fig:OpenLoopTrajectories}
\end{figure*}

\end{document}